\title{Test Martingales for bounded random variables}
\author{Harrie Hendriks}
\date{}
\newif\ifappendix\appendixfalse
\newtheorem{theorem}{Theorem}
\newtheorem{corollary}[theorem]{Corollary}
\newtheorem{lemma}[theorem]{Lemma}
\newtheorem{remark}{Remark}
\newenvironment{proof}[1][Proof]{\noindent\textbf{#1}\quad}{\ \rule{0.5em}{0.5em}}
\def\text{\mbox}
\def\R{{\mathbb R}}
\def\P{{\mathbb P}}
\def\E{{\bf E}}
\def\Var{{\rm Var}}
\def\Alt{\hbox{\rm Alt}}
\begin{document}

\maketitle

\begin{abstract}
\iffalse
Test martingales have been proposed as a more intuitive approach to hypothesis testing
than the classical approach based on the \emph{p-value} of a sample.
%Test martingales lead to sequential test procedures that allow 
If one is interested in a test procedure that in some sense allows 
you to do minimal work
when lucky and to do further work if less lucky
the test martingale technique allows to accomodate the new sample results,
much like Bayesian decision techniques do.
Unlike Bayesian techniques the test martingale technique allows for a classical frequentist interpretation
of the test result.
\\
\fi
Given a random sample from a random variable $T$ which is bounded from above, $T\le\tau$ a.s., 
we define processes that are positive supermartingales if $\E(T)\ge\mu$.
Such processes are called test martingales.
%under a hypothesis of the type $H_0:\E(T)\ge\mu$. 
Tests of the supermartingale hypothesis implicitly test the hypothesis $H_0:\E(T)\ge\mu$.
We construct test martingales that lead to tests with power 1. 
We also construct confidence upper bounds.
We extend the techniques to testing $H_0:\E(T)=\mu$ and constructing confidence intervals.
\\
In financial auditing random sampling is proposed as one of the possible techniques to gather
enough assurance to be able to state that there is no 'material' misstatement in a financial report.
The goal of our work is to provide a mathematical context 
that could represent such process
of gathering assurance by means of repeated random sampling.
\end{abstract}

% Key words and phrases: 
\emph{Mathematics Subject Classification:} Primary 62L12; Secondary 60G42, 62G10, 62G15

\emph{Keywords:} Sequential hypothesis test, maximal lemma, hypothesis on mean, 
nonparametric test, consistency, confidence upper bound, confidence interval, audit sampling

%%%%%%%%%%%%%%%%%%%%%%%%%%%%%%%%%%%%%%%%%%%%%%%%%%%%%%%%%%%%%%%%%%%%%%%%%%%%%%
\section{Introduction}\label{Introduction}
We are inspired by Gr\"unwald \cite{PG} and Shafer et al \cite{SSVV}
who recall the relationship between sequential probability ratio tests (\cite{W}) and martingale theory.
In \cite{PG} the test martingale concept is explicitly announced
as a contribution to the current discussion about the interpretation of \emph{p-value} in scientific literature.
The cited works mainly describe tests concerning the parameters in a parametrized family of 
probability distributions. 
We will describe tests concerning the expectation value of a random variable,
under the only assumption that the support (of the probability distribution) of the random variable 
is bounded from above and/or below 
%is contained  in some given interval 
depending on the null hypothesis. %$[\tau_0,\tau_1]$.
\\
\\
In this paper we hope to reach not only statisticians with a reasonable background in probability,
but also applied statisticians. 
That is why we will explain some notions from probability.
We will say that some event is almost sure, or a.s., 
if its probability is 1 with respect to the relevant
probability distribution(s). 
The term random variable may be abbreviated to rv. 
A random variable $Z$ will be said \emph{integrable} if its expected value $\E(Z)$ exists (and is finite)
and will be called \emph{positive} if $Z\ge0$ a.s..
A sequence of rv's $\{T_k\}_{k=1}^\infty$, 
%i.e. 
$T_1,T_2,T_3,\ldots$,
is a random sample or an iid (independent identically distributed) sample of $T$ if it is an collection of independent rv's 
and each $T_k$ has the same distribution as $T$.
\\
\\
In the context of financial auditing
we have in mind that $T$ is defined on some population $\Omega$, 
say a finite set $\Omega=\{\omega_1,\ldots,\omega_L\}$, in the sense that given $\omega\in \Omega$ 
there is a well defined procedure to determine its value $T(\omega)\in\R$.
The auditor has to assure himself that $\Omega$ is well defined and that the procedure to determine
a $T$-value is practically feasible.
One is interested in a characteristic of $T$ that can be interpreted as 
the expected value $\E(T)=\sum_{\omega\in\Omega}T(\omega)p(\omega)$
with respect to a probability density $p$ on $\Omega$
(i.e. 
 $\forall\omega:p(\omega)\ge0$ and $\sum_{\omega\in\Omega} p(\omega)=1$).
\\
For example $\omega_1,\ldots,\omega_L$ are identifiers of items underlying a financial report.
For $\omega\in\Omega$ one has its book value $B(\omega)>0$, the audited value $A(\omega)$
and the so-called tainting $T(\omega)=(B(\omega)-A(\omega))/B(\omega)$.
In this context one usually knows that $0\le A(\omega)\le B(\omega)$, so that $1\ge T(\omega)\ge0$.
The total book value is $B_{\hbox{tot}}=\sum_{\omega\in\Omega}B(\omega)$. 
The total misstatement equals
$$\sum_{\omega\in\Omega}(B(\omega)-A(\omega))=B_{\hbox{tot}}\sum_{\omega\in\Omega}T(\omega)B(\omega)/B_{\hbox{tot}}=B_{\hbox{tot}}\sum_{\omega\in\Omega}T(\omega)p(\omega),$$
where $p(\omega)=B(\omega)/B_{\hbox{tot}}$ satisfies the properties of a probability density.
\\
The problem is that $\Omega$ is a large set, 
so that it is not practical to determine all $T$-values.
Given a number $z\in[0,1]$ one may associate to it that item $w(z)=\omega_\ell\in \Omega$ such that
$\sum_{i=1}^{\ell-1}p(\omega_i)<z\le\sum_{i=1}^{\ell}p(\omega_i)$.
A random sample $T_1,T_2,\ldots$ of $T$ can be constructed, by using a random number generator 
that yields a random sample of
numbers $z_1,z_2,\ldots$, uniformly distributed in [0,1], and 
associate to it $T_1=T(w(z_1)), T_2=T(w(z_2)),\ldots$.
Notice that all $\omega\in\Omega$ will occur (almost surely) infinitely often
in the sequence $w(z_1),w(z_2),\ldots$.
\\
\\
A  process (in discrete time) $\{X_k\}_{k=0}^\infty$ is a sequence of 
rv's $X_0,X_1,X_2,\ldots$.
The index $k$ is referred to as time, and we will speak about time $k$.
The notion of a filtration $\{\mathcal F_k\}_{k=0}^\infty$ is used to formalize the notion of time.
Here $\mathcal F_k$ is a $\sigma$-algebra which represents all the information that is available 
at time $k$.
The process $\{X_k\}_{k=0}^\infty$ is adapted to $\{\mathcal F_k\}_{k=0}^\infty$
if the variables $X_0,X_1,\ldots,X_k$ are $\mathcal F_k$-measurable, that is, their values 
can be observed, measured, %and measurable
by time $k$.
The process is \emph{integrable} (resp. \emph{positive}) if each rv $X_k$ is integrable (resp. positive).
For $\ell\ge k$ the conditional expectation $\E(X_\ell\mid\mathcal F_k)$
is an rv that is $\mathcal F_k$-measurable. % at time $k$.
If the process $\{X_k\}_{k=0}^\infty$ is adapted to $\{\mathcal F_k\}_{k=0}^\infty$, 
$\E(X_\ell\mid\mathcal F_k)$ %it 
represents the expected value of $X_\ell$, given the information at time $k$
which includes the observed values of $X_0,\ldots,X_k$.
It holds that
$\E(X_\ell\mid\mathcal F_k)=X_\ell$ for $\ell\le k$.
On the other hand
the conditional probability of rv $X_\ell$ with respect to the trivial $\sigma$-algebra (no information)
corresponds to the ordinary notion of expected value.
It holds that $\E(X_\ell)=\E(\E(X_\ell\mid\mathcal F_k))$.
\\
Given a null hypothesis $H_0$ and a hypothesis test for $H_0$, 
its \emph{size} at some rv $T$ that satisfies $H_0$ is the probability of an error of Type I,
that $H_0$ is rejected by the test,
when applied to $T$.
Without reference to an rv, 
the size of a test equals the maximum size possible at any rv satisfying the null hypothesis.
\\
Given $0\le\nu\le1$, by $\Alt(\nu)$  we
denote the alternative distribution with values 0 and 1 and expected value $\nu$, 
in particular the probability of 1 (resp. 0) is $\nu$ (resp. $(1-\nu)$).
An argument that is used a few times is the following variation on Jensen's inequality:
\begin{lemma}\label{Jensen+}
Let $t\mapsto f(t)$, $\tau_0\le t\le\tau_1$, be a concave function and $T$ a random variable such that
$\tau_0\le T\le\tau_1$ a.s..
Let $T^A$ be the alternative distribution with values $\tau_0$ and $\tau_1$ such that $\E(T^A)=\E(T)$.
Then $\E(f(T^A))\le\E(f(T))\le f(\E(T))$.
\end{lemma}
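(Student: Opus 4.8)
The plan is to establish the two inequalities separately. The right-hand bound $\E(f(T))\le f(\E(T))$ is just Jensen's inequality for the concave function $f$ applied to $T$ (whose values lie in the domain $[\tau_0,\tau_1]$), so I would simply invoke it.

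For the left-hand bound the idea is to compare $f$ against the chord through its endpoint values. First I would introduce the affine function
$$\ell(t)=f(\tau_0)+\frac{f(\tau_1)-f(\tau_0)}{\tau_1-\tau_0}\,(t-\tau_0),\qquad \tau_0\le t\le\tau_1,$$
which coincides with $f$ at both endpoints $\tau_0$ and $\tau_1$. Concavity of $f$ is precisely the statement that its graph lies on or above every chord, so $\ell(t)\le f(t)$ throughout $[\tau_0,\tau_1]$. Since $\tau_0\le T\le\tau_1$ a.s., substituting $T$ gives $\ell(T)\le f(T)$ a.s., and taking expectations yields $\E(\ell(T))\le\E(f(T))$.

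The second step is to recognise that $\E(\ell(T))$ equals $\E(f(T^A))$. Because $\ell$ is affine, its expectation depends on the distribution only through the mean, so $\E(\ell(T))=\ell(\E(T))$. The alternative distribution $T^A$ is concentrated on $\{\tau_0,\tau_1\}$, exactly where $f$ and $\ell$ agree; hence $f(T^A)=\ell(T^A)$ a.s., and therefore $\E(f(T^A))=\E(\ell(T^A))=\ell(\E(T^A))=\ell(\E(T))$ using the matching-mean hypothesis $\E(T^A)=\E(T)$. Combining these identities with the inequality above gives $\E(f(T^A))=\E(\ell(T))\le\E(f(T))$, as required.

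I do not expect a genuine obstacle here, since the whole argument reduces to the single geometric fact that a concave function dominates its chords, together with the affine-invariance of expectation under the mean. The only point meriting a line of care is the well-definedness of $T^A$: its mass at $\tau_1$ must be $(\E(T)-\tau_0)/(\tau_1-\tau_0)$, and one should note that this lies in $[0,1]$ precisely because $\tau_0\le\E(T)\le\tau_1$, which in turn follows from $\tau_0\le T\le\tau_1$ a.s.
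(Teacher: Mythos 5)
Your proof is correct and follows essentially the same route as the paper: the chord (linear interpolation) of $f$ at $\tau_0,\tau_1$ lies below $f$ by concavity, it agrees with $f$ on the support of $T^A$, its expectation depends only on the mean, and Jensen's inequality gives the right-hand bound. Your added remark on the well-definedness of $T^A$ is a small but legitimate extra care the paper leaves implicit.
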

\begin{proof}
Let $\overline f$ be the linear interpolation of $f$ at the points $\tau_0,\tau_1$,
$\overline f(t)=((\tau_1-t)f(\tau_1)+(t-\tau_0)f(\tau_0))/(\tau_1-\tau_0)$.
Then $\overline f(t)\le f(t)$ for $t\in[\tau_0,\tau_1]$
and $\E(f(T^A))=\E(\overline f(T^A))=\E(\overline f(T))\le\E(f(T))$.
Jensen's inequality yields $-f(\E(T))\le \E(-f(T))=-\E(f(T))$.
\end{proof}
\\
\\
In section \ref{Test martingales} we expose the so-called maximal lemma 
and show how it leads to a
test that a random process is a supermartingale.
Suppose given an upper bound $\tau$
and a null hypothesis $H_0:\E(T)\ge\mu$ about rv's $T$ such that $T\le\tau$ a.s..
In section \ref{H0} we develop a method to construct a process $\{M_k\}_{k=0}^\infty$, given a
random sample $\{T_k\}_{k=1}^\infty$ of $T$ with $T\le\tau$ a.s.,
which will be a positive supermartingale, if $T$ satisfies $H_0$.
Such a process is called a test martingale.
In section \ref{Behavior} we study the behavior of test martingales, depending on $T$.
In section \ref{conf reg} we apply the technique to the construction of confidence upper bounds
and confidence intervals.

\section{Test martingales}\label{Test martingales}

Suppose given an integrable process $\{X_k\}_{k=0}^\infty$, 
adapted to the filtration $\{\mathcal F_k\}_k$.
The $\sigma$-algebra $\mathcal F_k$ represents the information available at time $k$, including
the values of $X_i$ for $i\le k$.
Then define
\begin{align*}
X_k^*&=\sup_{i\le k}X_i
\\
X_\infty^*&=\sup_kX_k
\end{align*}
The process $\{X_k\}_k$ is a \emph{supermartingale} 
if $X_k\ge\E[X_\ell\mid \mathcal F_k]$ for $0\le k\le \ell<\infty$.
(It is called a martingale if $X_k=\E[X_\ell\mid \mathcal F_k]$ for $0\le k\le \ell<\infty$
and a submartingale if $X_k\le\E[X_\ell\mid \mathcal F_k]$ for $0\le k\le \ell<\infty$.)
Suppose $\{X_k\}_k$ is a positive supermartingale.
The 'maximal lemma' (\cite[Ch. V.3, Corollary 21]{DM} and 
\cite[Ch. 5 Exercise 5.7.1]{D})
% Third Ed. Ch. 4, Exercise 7.1
implies that
$$
\forall\lambda\ge0:\lambda\,\P\{X_\infty^*\ge\lambda\}\le\E[X_0].
$$
We apply this result in the following somewhat weaker form: % of which we give an indication of proof.

\begin{lemma}\label{maximal lemma}
If $\{X_k\}_k$ is a positive supermartingale,
 then
$$\forall\lambda\ge0:\lambda\,\P\{\exists \ell: X_\ell\ge\lambda\}\le\E[X_0].$$
\end{lemma}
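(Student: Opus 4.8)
The plan is to deduce this lemma directly from the stronger maximal inequality quoted just above, since the asserted bound is literally a weakening obtained by replacing the supremum $X_\infty^*$ with the requirement that some individual term of the process reach $\lambda$. The only genuine thing to check is an elementary inclusion of events; everything else is monotonicity.

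First I would observe that the event $\{\exists \ell: X_\ell\ge\lambda\}$ is contained in the event $\{X_\infty^*\ge\lambda\}$. Indeed, if there is an index $\ell$ with $X_\ell\ge\lambda$, then $X_\infty^*=\sup_k X_k\ge X_\ell\ge\lambda$, so the first event implies the second. (The reverse inclusion can fail, because the supremum need not be attained; this is exactly why the present form is weaker.) Monotonicity of probability then gives $\P\{\exists \ell: X_\ell\ge\lambda\}\le\P\{X_\infty^*\ge\lambda\}$. Multiplying by $\lambda\ge0$ and combining with the quoted maximal lemma yields
$$\lambda\,\P\{\exists \ell: X_\ell\ge\lambda\}\le\lambda\,\P\{X_\infty^*\ge\lambda\}\le\E[X_0],$$
which is the claim.

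There is essentially no obstacle in this route: the content of the lemma resides entirely in the cited maximal inequality, and the present statement is just a measurability-friendly reformulation that lets one speak of concrete crossings rather than the supremum process $X_\infty^*$. If one preferred to avoid $X_\infty^*$ altogether, an alternative would be to apply the finite-horizon bound $\lambda\,\P\{\max_{k\le n}X_k\ge\lambda\}\le\E[X_0]$ (obtained by optional stopping at the first time $X_k$ crosses $\lambda$, using the supermartingale and positivity properties) and then let $n\to\infty$, noting that the events $\{\exists \ell\le n: X_\ell\ge\lambda\}$ increase to $\{\exists \ell: X_\ell\ge\lambda\}$, so that continuity of probability from below transfers the bound to the limit. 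In that approach the only nontrivial step is the finite-horizon maximal inequality itself, which is precisely what the quoted maximal lemma already supplies.
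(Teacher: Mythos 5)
Your proposal is correct, and your primary route is genuinely different from the paper's own proof: you reduce the lemma to the stronger inequality $\lambda\,\P\{X_\infty^*\ge\lambda\}\le\E[X_0]$ quoted from the literature, via the (correct) inclusion $\{\exists \ell: X_\ell\ge\lambda\}\subseteq\{X_\infty^*\ge\lambda\}$ and monotonicity of probability. That is logically airtight, but it treats the cited maximal inequality as a black box; the paper instead gives a self-contained argument that never invokes it. The paper's proof is exactly your sketched alternative: let $N$ be the first time the process reaches or exceeds $\lambda$, note that the stopped process $\{X_{k\wedge N}\}_k$ is still a positive supermartingale, so $\E[X_0]\ge\E[X_{\ell\wedge N}]\ge\lambda\,\P\{X_{\ell\wedge N}\ge\lambda\}=\lambda\,\P\{N\le\ell\}$ by Markov's inequality and positivity, and then let $\ell\to\infty$ using that the events $\{N\le\ell\}$ increase to $\{\exists\ell: X_\ell\ge\lambda\}$. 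What your main route buys is brevity, at the cost of resting on an external reference; what the paper's (and your alternative) route buys is self-containedness, using only optional stopping for supermartingales and continuity from below, which is presumably why the author chose it despite having just quoted the stronger result.
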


\begin{proof}
%A proof %of the maximal lemma 
%could run along the following lines: 
Consider the random variable $N$ which is 
the first time $k$ that the process $\{X_k\}_k$ reaches or exceeds level $\lambda$,
or $\infty$ if the process does not exceed level $\lambda$.
$N$ is a stopping time.
If one stops the supermartingale $\{X_k\}_k$ at that time, 
the stopped process $\{X_{k\wedge N}\}_k$ is still a supermartingale.
Thus for any $\ell$  we have $\E[X_0]=\E[X_{0\wedge N}]\ge\E[X_{\ell\wedge N}]$ while
for the positive random variable $X_{\ell\wedge N}$ we have
$\E[X_{\ell\wedge N}]\ge\lambda\P\{X_{\ell\wedge N}\ge\lambda\}$. %\ge\lambda\P(X_{t\wedge S}\ge\lambda)$ 
%(Cf. \cite[Ch. 5 Exercise 5.7.1]{D}). 
The lemma then follows since the events 
$\{X_{\ell\wedge N}\ge\lambda\}=\{N\le\ell\}$ form an increasing sequence for increasing $\ell$
whose union is $\{N<\infty\}=\{\exists \ell:X_\ell\ge\lambda\}$.
\end{proof}

\begin{remark}\label{sharpness}
Notice that Lemma \ref{maximal lemma} yields a sharp bound for $\P\{\exists k:X_k\ge\lambda\}$ 
in the case of the martingale associated with
a random sample $Z_1,Z_2,\ldots$ of an alternative distribution $Z\sim \Alt(\mu)$ with $0<\mu<1$,
where one takes 
$$X_0=1 \hbox{ and }X_k=X_{k-1}\cdot(1-(Z_k-\mu)/(1-\mu)).$$
In the event $\{Z_1=Z_2=\ldots=Z_k=0,Z_{k+1}=1\}$ we have $X_i=(1-\mu)^{-i}$ for $i\le k$
and $X_\ell=0$ for $\ell\ge k+1$.
Let $n$ be such that $(1-\mu)^{1-n}<\lambda\le(1-\mu)^{-n}$.
Then $\P\{\exists k:X_k\ge\lambda\}=\P\{Z_1=\ldots=Z_n=0\}=(1-\mu)^n\le\lambda^{-1}$
with equality if $\lambda=(1-\mu)^{-n}$.
\end{remark}
We follow \cite{PG} and \cite{SSVV} where 
the significance of the above ideas 
for
statistical hypothesis testing is worked out. 
%\\
Be given a statistical hypothesis $H_0$.
% Assuming the truth of a statistical hypothesis $H_0$,
A \emph{test (super)martingale} (for $H_0$) is a process $\{X_k\}_{k=0}^\infty$
% satisfying $\forall t:X_t\ge0$ a.s.,
such that, if $H_0$ is satisfied, 
$\{X_k\}_k$ is a positive supermartingale and $\E[X_0]\le1$.
Be given a significance level $\alpha$, $0<\alpha\le1$.
A test consists of observing $X_1,X_2,\ldots,X_n$, where $n$ is allowed to depend on
anything including the process itself, 
e.g. that $X_n^*\ge\lambda$ if that ever happens.
We reject $H_0$
if $X_n^*\ge1/\alpha$, and otherwise we do not reject $H_0$.
The size of such test satisfies
$$
\P\{X_n^*\ge1/\alpha\}\le\P\{\exists k: X_k\ge1/\alpha\}\le\alpha\E(X_0)\le\alpha.
$$
%%%%%%%%%%%%%%%%%%%%%%%%%%%%%%%%%%%%%%%%%%%%%%%%%%%%%%%%%%%%%%%%%%%%%%%%%%%%%%
\section{Null hypothesis and supermartingales}\label{H0}

We will construct test martingales to test the null hypothesis $\E(T)\ge\mu$.
Given $\tau>\mu$, let us find functions $f:\R\to\R$ such that
for all integrable random variables $T$ the following holds
$$[T\le \tau\hbox{ a.s. \& }\E(T)\ge\mu]\Rightarrow [f(T)\ge0\hbox{ a.s. \& }\E(f(T))\le 1].$$
Examples are $f(t)=a-b\,t$ with $b\ge0$ such that $a-b\,\tau\ge0$
and $a-b\mu\le1$.
In particular one needs $b\,\tau\le a\le 1+b\mu$ and $b(\tau-\mu)\le1$.
So we have $0\le b\le1/(\tau-\mu)$ and for optimality reasons we will prefer $a=1+b\mu$.
Thus we find the following functions for $0\le c\le 1$
\begin{equation*}%\label{factor}
f(t)=1-c\cdot\frac{t-\mu}{\tau-\mu}.
\end{equation*}

\begin{remark}
The condition on the essential supremum 
of $T$, $T\le\tau$ a.s., for some $\tau\in\R$, is necessary to make it possible that
$\E(f(T))>1$ for some random variable $T$ with $\E(T)<\mu$.
\end{remark}
One can see this as follows.
If $f$
is a function such that $f(x)\ge0$ for all $x$ and such that for all rv's $X$ with $\E(X)\ge\mu$
it holds that $\E(f(X))\le1$, then $f(x)\le1$ for all $x$.
Namely, suppose $f(x_1)>1$. Then $x_1<\mu$. Let $x_2>\mu$ be such that the linear interpolation
of $f$ between $x_1$ and $x_2$ has value larger than 1 at $\mu$ 
(e.g. take $x_2>\mu+(\mu-x_1)/(f(x_1)-1)$). 
Then the rv $X$ with expected value $\mu$ and support $\{x_1,x_2\}$ 
has $\E(f(X))>1$. $\blacksquare$
\\
\\
We present a variation on the condition $T\le\xi$ a.s.

\begin{remark}
Let $g:\R\to\R$ be an increasing strictly convex function. 
Replace the condition $T\le\tau$ a.s. with the condition
$\E(g(T))\le\tau$ for some given $\tau\ge g(\mu)$.
Then consider the strictly convex function $f(x)=a-b(x-\mu)-c(\tau-g(x))$, with $a,b,c>0$,
such that $f(x)\ge0$ for all $x$.
If $f$ reaches its minimum at $x_0$ then it would be the solution of $g'(x_0)=b/c$, and we 
would require that
$a-b(x_0-\mu)-c(\tau-g(x_0))\ge0$.
Moreover under the additional condition $\E(T)\ge\mu$ we require that $\E(f(T))\le1$, that is
$a\le1$.
For testing purposes it is optimal to choose $a=1$.
\end{remark}
%
\iffalse
mu<-0.05
eta<-exp(1)
a<-1
x0<--5
uniroot(function(cc)a-cc*exp(x0)*(x0-mu)-cc*(eta-exp(x0)),c(0,10))
cc0<-a/(exp(x0)*(x0-mu)+eta-exp(x0))
x<-seq(-500,3,length=1000);plot(x,a-cc0*exp(x0)*(x-mu)-cc0*(eta-exp(x)),type='l')
abline(h=c(0,1),v=mu,col='red')
x0<--1
cc0<-a/(exp(x0)*(x0-mu)+eta-exp(x0))
x<-seq(-50,3,length=1000);plot(x,a-cc0*exp(x0)*(x-mu)-cc0*(eta-exp(x)),type='l')
abline(h=c(0,1),v=mu,col='red')
x0<-+5
cc0<-a/(exp(x0)*(x0-mu)+eta-exp(x0))
x<-seq(-1,1,length=1000);plot(x,a-cc0*exp(x0)*(x-mu)-cc0*(eta-exp(x)),type='l')
abline(h=c(0,1),v=mu,col='red')
\fi
Let $\tau\in\R$ be given, and suppose $T$ is an integrable random variable 
such that $T\le\tau$ a.s.. 
For $\mu<\tau$ 
consider the null hypothesis
$$H_0:\E(T)\ge\mu.$$
We would like to construct a 
test martingale for $H_0$.
\\
\\
Consider a random sample $T_1,T_2,\ldots$ of the random variable $T$.
It defines a filtration $\{\mathcal F_k^o\}_k$ by $\sigma$-algebras $\mathcal F_k^o=\sigma(T_1,\ldots,T_k)$ for $k\ge1$ and the trivial $\sigma$-algebra $\mathcal F_0$.
Under the null hypothesis we get a supermartingale as follows.
We let $M_0=1$. 
At time $(k-1)$ 
the variables $T_1,\ldots,T_{k-1}$ and $M_{k-1}$ are observed, having values $t_1,\ldots,t_{k-1}$ and $m_{k-1}$,
and one defines 
\begin{equation}\label{test martingale}
M_k=m_{k-1}\cdot\left(1-c_{k-1}\frac{T_k-\mu}{\tau-\mu}\right).
\end{equation}
Here $c_{k-1}$ may not depend on $T_k,T_{k+1},\ldots$ in any conceivable way
(more precisely, $c_{k-1}$ is 'predictable' at time $k$, i.e. $c_{k-1}$ is $\mathcal F_{k-1}$-measurable).
%In particular they have to be determined at the moment that the to be observed value of $T_k$ is not yet known. 
For our choice of filtration the functional dependence of $c_{k-1}$ on the observations of $T_1,\ldots,T_{k-1}$ 
should have been
fixed before any observation was available.
But see Remark \ref{filtration} for a broader, more practical class of filtrations.
The process $\{M_k\}_k$ is a test martingale for $H_0$.
We obtain a test with significance level $\alpha$ if we reject $H_0$ at a time $k$ with 
$M_k^*\ge1/\alpha$.
%where $b_{k-1}\mu$ is a reasonable estimate of $\E(T)$.

\begin{remark}\label{sharpness2}
(Cf. Remark \ref{sharpness})
If $\tau>\mu>0$, and one believes that $T=0$ a.s., one would choose $c_{k-1}=1$
in order to have $M_k$ as large as possible. 
We reject $H_0$ after $k$ observations when we observe $m_k=(\tau/(\tau-\mu))^k\ge1/\alpha$,
that is, $(1-\mu/\tau)^k\le\alpha$.
Unfortunately, if $\E(T)>0$ and one observes some $t_i=\tau$, then $M_k=0$ for $k\ge i$, 
and there is no hope to reject $H_0$ afterwards.
\end{remark}
If one considers a classical test for alternative distributions $\Alt(\nu)$ to test $H_0:\nu\ge\mu$
the smallest sample size needed for significance level $\alpha$ 
is the minimal $n$ such that $(1-\mu)^n\le\alpha$.
That is exactly the size where one may stop the test martingale, 
constructed with $\tau=1$ and $c_{k-1}=1$ for all $k$.
\\
As in \cite{PG} we will express the above construction in a \emph{gambling metaphor}, 
that we present as a
'martingale transform' of a supermartingale by a positive predictable process (see e.g. \cite[Thm. 5.2.5]{D}, \cite[Section 10.6]{Wpwm}).
We consider the hypothesis $H_0:\E(T)\ge\mu$ that we would like to reject.
Based on the random sample $T_1,T_2,\ldots$ of $T$, 
consider the process $X=\{X_k\}_k$ with $X_0=0$ and $X_k=X_{k-1}-(T_k-\mu)/(\tau-\mu)$.
Under $H_0$ the process $X$ is a supermartingale 
with respect to the filtration $\{\mathcal F_k=\sigma(T_1,\ldots,T_k)\}_k$.
Consider a lottery that takes place at time $k$ and pays out $(\tau-T_k)/(\tau-\mu)$
per unit stake, so that the net gain per unit stake is 
$-1+(\tau-T_k)/(\tau-\mu)=-(T_k-\mu))/(\tau-\mu)=X_k-X_{k-1}$. 
We start with an initial unit amount of capital $m_0=1$.
At time $k-1$ we have accumulated a capital of $m_{k-1}$ and
we decide to stake an amount of $H_k=c_{k-1}m_{k-1}$ in this lottery, $0\le H_k\le m_{k-1}$.
Then at time $k$ our capital will be 
$m_{k-1}+H_k(X_k-X_{k-1})=m_{k-1}(1-c_{k-1}(T_k-\mu)/(\tau-\mu)=m_k$.
%The gain process $\{\sum_{i=1}^kH_k(X_k-X_{k-1})\}_k$ 
%is called the martingale transform of the process $X$ by the 
%predictable process $H=\{H_k\}_k$.
If $T$ satisfies hypothesis $H_0$, we have a fair or loss-making game.
In particular, if we succeed in ending up with a large gain, we have reason to state that $\E(T)<\mu$.
With this metaphor it should be intuitively correct, that 
it is wrong to change the stake amount $c_{k-1}m_{k-1}$, after the observation of $T_k$.
Moreover, if one continues with a new gambling game to reject $H_0$, 
one has to start off with the capital accumulated from the original gambling game.
\\
\\
\emph{Comparison with Likelihood Ratio test.}
Suppose $T$ is an rv which is distributed Alt($\mu$) or Alt($\nu$) and 
consider the null hypothesis $H_0:T\sim\Alt(\mu)$ that $T$ is distributed Alt($\mu$).
The likelihood ratio test yields maximal power at a given probability of rejection
of the hypothesis in case $T\sim\Alt(\mu)$.
This test is using a test statistic of the form
$$
\Lambda_k=\frac{(1-\mu)^{k-r_k}\mu^{r_k}}{(1-\nu)^{k-r_k}\nu^{r_k}}, \hbox{ where }r_k=\sum_{i=1}^kT_i
\hbox{ is the number of successes}$$
and $H_0:\E(T)=\mu$ is rejected if $\Lambda_k$ is small enough.
In \cite{W} such a starting point is worked out to the so-called Sequential Probability Ratio Test.
If we would accomodate this in our test martingale setup, we would be tempted to use 
%$a_{k-1}=(1-\nu)/(1-\mu)$ and $b_{k-1}=(\mu-\nu)/(\mu(1-\mu))$,
$c_{k-1}=(\mu-\nu)/\mu$
so that one obtains $M_k=1/\Lambda_k$, which is a martingale under $H_0$, 
and we would reject $H_0$ if for some $k$ we have $M_k\ge1/\alpha$.
Notice the identity
$$1-\frac{\mu-\nu}{\mu}\cdot\frac{T_k-\mu}{1-\mu}=
%1-\frac{\mu-\nu}{\mu(1-\mu)}(T-\mu)=
\frac{1-\nu}{1-\mu}(1-T_k)+\frac{\nu}{\mu}T_k,
$$
indicating that this choice of $c_{k-1}$ corresponds to a linear interpolation 
between the (inverse) likelihood ratios for $T_k=0$ and $T_k=1$ in the above context of alternative distributions.
\\
\\
A reasonable choice for $c_{k-1}$ in the construction (\ref{test martingale}) of a test martingale
is that value of $c$ that maximizes
$m_{k-1}(c)=\prod_{i=1}^{k-1}(1-c(t_i-\mu)/(\tau-\mu))$, where some prudence is necessary to avoid $c_{k-1}=1$.
A strongly recommended possibility is to start with some probability density $\pi$ on $[0,1]$, 
typically the uniform probability density on [0,1].
Define $\{M_{k}(c)\}_k$ to be the test martingale based on the choice $c_{k-1}=c$, all $k$,
and consider the \emph{integrated test martingale} with respect to $\pi$:
\begin{align*}
M_k(\pi)&=\int_0^1M_k(c)\pi(c)dc=\int_0^1M_{k-1}(c)\cdot(1-c\,(T_k-\mu)/(\tau_1-\mu))\pi(c)dc
\\&=M_{k-1}(\pi)\cdot(1-c_{k-1}(T_k-\mu)/(\tau_1-\mu)), \hbox{ where }
\\
c_{k-1}&=\int_0^1 c\cdot M_{k-1}(c)\pi(c)dc/M_{k-1}(\pi).%=
%\int_0^1 c\cdot M_{k-1}(c)d\pi(c)/\int_0^1  M_{k-1}(c)d\pi(c.
\end{align*}
Notice that $c_{k-1}$ is the expectation 
of the probability density $f_{k-1}$ defined by $f_{k-1}(c)=M_{k-1}(c)\pi(c)/M_{k-1}(\pi)$, 
and that $M_{k-1}(c)$ is a log-concave function in $c$ (see Lemma \ref{L1}).
In case $\pi$ is the uniform probability distribution on [0,1],
for large $k$, density $f_{k-1}$ will concentrate around the value of $c$ for which
$M_{k-1}(c)$ is largest.

\begin{remark}\label{filtration}
In practice the observation of an rv $T_k$ is accompanied by some, possibly random, attributes
like the real time and the monetary cost needed to determine the value of $T_k$.
In particular the actual filtration that one would like to adopt is much richer than
$\{\mathcal F_k^o=\sigma(T_1,\ldots, T_k)\}_k$, 
and should include available real time and monetary budget,
and possibly the mental condition of the investigator.
In order to stay close to the intuition for a random, iid, sample $T_1,T_2,\ldots$, 
a suitable extra condition
on the sample is that $T_k,T_{k+1},\ldots$ and their attributes are independent 
of all information contained in $\mathcal F_{k-1}$.
One can reach this by actually \emph{hiding} all information about the rv's $T_\ell$ for $\ell\ge k$ and their attributes
until the decision to determine and process the value of $T_k$. % including the processing the observed value of $T_k$. 
On the other hand, enriching of the filtration typically allows for an
$\mathcal F_{k-1}$ measurable rv $c_{k-1}$, depending not only on $T_1,\ldots,T_{k-1}$,
but for example also on the built-up insights of the investigator up to time $k-1$. 
\end{remark}
We have presented the theory in its purely sequential form.
In practice it may occur that sampling takes place in the form of 
a sequence of random samples $S_1=\{T_1,\ldots,T_{n_1}\}, S_2=\{T_{n_1+1},\ldots,T_{n_1+n_2}\},\ldots$
of rv $T$.
Associated to this is a filtration $\mathcal F_0,\mathcal F_1,\mathcal F_2,\ldots$ where
$n_1$ is $\mathcal F_0$-measurable, $T_1,\ldots,T_{n_1}$ and $n_2$ are $\mathcal F_1$-measurable and so on.
Moreover one would like $S_1$ independent of $\mathcal F_0$, $S_2$ independent of $\mathcal F_1$, etc.
A test martingale may be constructed by choosing $c_0$ $\mathcal F_0$-measurable, $c_1$ $\mathcal F_1$-measurable
and having $M_0=1$,
$$
M_1=M_0\times\prod_{i=1}^{n_1}(1-c_0\frac{T_i-\mu}{\tau-\mu})
,\qquad
M_2=M_1\times\prod_{i=n_1+1}^{n_1+n_2}(1-c_1\frac{T_i-\mu}{\tau-\mu}),\qquad \ldots
$$
Thus we have constructed a test martingale for $H_0:\E(T)\ge\mu$.
Of course there is also an integrated version: $M_0(\pi)=1$,
$$
M_1(\pi)=\int\prod_{i=1}^{n_1}(1-c\,\frac{T_i-\mu}{\tau-\mu})\pi(c)dc
,\quad
M_2(\pi)=\int\prod_{i=1}^{n_1+n_2}(1-c\,\frac{T_i-\mu}{\tau-\mu})\pi(c)dc,\quad \ldots
$$
This procedure is less efficient than the sequential procedure, mainly because
the above $M_1, M_2, \ldots$ occur 
as $M_{n_1},M_{n_1+n_2},\ldots$ in a test martingale corresponding to 
a sequential procedure based on $T_1,T_2,T_3,\ldots$,
so that intermediate opportunities to reject $H_0$ are missed.

\subsubsection*{Test martingales for the null-hypothesis $H_0:\E(T)\le\mu$ or $H_0:\E(T)=\mu$}
Given $\tau_0\le T$ a.s. and $\mu>\tau_0$, we obtain test martingales for the
null-hypothesis $H_0:\E(T)\le\mu$ by transforming it into $H_0:\E(-T)\ge-\mu$
leading to multiplication factors
\begin{equation}\label{lower bound}
1-c_{k-1}\,\frac{-T_k+\mu}{-\tau_0+\mu},~k=1,2,\ldots,
\end{equation}
with $0\le c_{k-1}\le1$.
As a curiosity we remark the following 
\begin{remark}
Suppose $\E(T)\le\nu<\mu$ and $\tau_0\le T\le\tau_1$ a.s.. 
Let $\{M_k(c)\}_k$ be the test martingale for $H_0:\E(T)\ge\mu$ as constructed before.
If $c\le(\mu-\nu)/(\mu-\tau_0)$, then $\{(M_k(c))^{-1}\}_{k=0}^\infty$ is a supermartingale.
In particular it is a test martingale for null hypothesis $H_0:\E(T)\le(1-c)\mu+c\tau_0$.
%In particular, if one observes $\exists k:M_k(c)\le\alpha$ one might reject
%at significance level $\alpha$ the hypothesis
%that  $\{(M_k(c))^{-1}\}_{k=0}^\infty$ is a supermartingale, or that $\E(T)\le\mu-c(\mu-\tau_0)$.
\end{remark}
The test martingale $\{(M_k(c))^{-1}\}_{k=0}^\infty$ for $H_0:\E(T)\le(1-c)\mu+c\tau_0$, 
as in the above Remark,
is not optimal for this $H_0$ since the factors $(1-c\,(T_k-\mu)/(\tau_1-\mu))^{-1}$
are concave in $T_k$ rather than linear.
Nevertheless, if $M_k(c)^{-1}\ge\alpha^{-1}$, i.e. $M_k(c)\le\alpha$, 
it signals that $\E(T)>(1-c)\mu+c\tau_0$. % is close to or larger than $\mu$.
\\
\\
\begin{proof}[Proof of remark]
It follows from the analog of Lemma \ref{Jensen+} for the convex function 
$t\mapsto f_c(t)$ with $f_c(t)=(1-c(t-\mu)/(\tau_1-\mu))^{-1}$.
Namely, suppose $\E(T)=\nu^*\le\nu$, the maximal value of $\E(f_c(T))$ is attained at the alternative distributed rv $T^A$
with values $\tau_0$ and $\tau_1$.
One checks easily that $\E(f_c(T^A))\le1$ if $c=(\mu-\nu^*)/(\mu-\tau_0)$ and if $c=0$.
Since $\E(f_c(T^A))$ is a convex function of $c$ we have $\E(f_c(T^A))\le1$
for $0\le c\le (\mu-\nu^*)/(\mu-\tau_0)$.
It follows that $\E(f_c(T))\le1$ if $0\le c\le (\mu-\nu)/(\mu-\tau_0)\le (\mu-\nu^*)/(\mu-\tau_0)$, which condition entails $\nu^*\le(1-c)\mu+c\tau_0$.
\end{proof}
\\
\\
One can combine a test martingale $\{M_n^+\}_n$ for $H_0:\E(T)\ge\mu$ and 
a test martingale $\{M_n^-\}_n$ for $H_0:\E(T)\le\mu$, based on the same data,
by taking $\{\rho^+ M_n^++\rho^-M_n^-\}_n$ for any $\rho^+,\rho^-\ge0$ with $\rho^++\rho^-=1$.
This will be a test martingale for $H_0:\E(T)=\mu$.

%
%
%%%%%%%%%%%%%%%%%%%%%%%%%%%%%%%%%%%%%%%%%%%%%%%%%%%%%%%%%%%%%%%%%%%%%%%%%%%%%%
\section{Behavior of the test martingales $\{M_k(c)\}_k$}\label{Behavior}
We discuss the behavior of the test martingales $\{M_k(c)\}_k$ constructed with $c_{k-1}=c$,
for all $k$, and given $0\le c<1$. 
Let $T$ be an integrable random variable such that $T\le\tau$ a.s..
We have
$$
\frac1k\log(M_k)=\frac1k\sum_{i=1}^k\log(1-c\,(T_i-\mu)/(\tau-\mu)).
$$
Consider the function 
$$
\lambda(c)=\E(\log(1-c\,(T-\mu)/(\tau-\mu)),\quad 0\le c<1.
$$
Because of the concavity of the $\log$ function we have
$$\log(1-c)\le \E(\log(1-c\,(T-\mu)/(\tau-\mu))\le\log(1-c\,(\E(T)-\mu)/(\tau-\mu)).$$
%For $0\le c<1$ let $\lambda(c)=\E(\log(1-c\cdot(T-\mu)/(\tau-\mu))$.
Let $Z=(T-\mu)/(\tau-\mu)$ then $\E(Z)$ exists and 
$Z\le1$ a.s..
Supposing differentiation with respect to $c$ behaves decently
with respect to expected value, we have
\begin{align*}
\lim_{c\downarrow0}\lambda(c)&=\lambda(0)=0;
\\
\lambda'(c)
&=
\E(-Z/(1-c\, Z));
\\
\lim_{c\downarrow0}\lambda'(c)
&=
\E(-Z)=-(\E(T)-\mu)/(\tau-\mu);
\\
\lambda''(c)
&=
\E(-[Z/(1-c\, Z)]^2)\le0.
\end{align*}
%In order to enforce continuity of $\lambda'(c)$ at $c=0$ we will
%require that $T$ is bounded from below, $\tau_0\le T$ a.s. for some $\tau_0$.
Notice that the first and second derivatives in $c$ for $0<c<1$ of 
$\log(1-c\, z)$
are uniformly bounded in $z\in(-\infty,1]$. %, $-(1-c)^{-1}\le\lambda'(c)\le1/c$.
In particular the above expected values do exist for $0<c<1$
and 
$\lambda(c)$ is a concave function in $c\in(0,1)$.
It is clear that
$$-\frac1{1-c}\le\frac{-Z}{1-c\,Z}\le-Z$$
so that by dominated convergence it follows that $\lim_{c\downarrow0}\E(-Z/(1-cZ))=\E(-Z)$.

\begin{lemma}\label{L1}
% For~ $0\le c<1$ define $\lambda(c)=\E(\log(1-c\cdot(T-\mu)/(\tau-\mu)))$.
%Then 
For $T$ as above,
$\lambda$ is a differentiable concave function on $(0,1)$,
continuous on $[0,1)$
and $\lim_{c\downarrow0}\lambda'(c)=-(\E(T)-\mu)/(\tau-\mu)$.
If $\P\{T=\mu\}<1$ it is strictly concave.
%Consider the test martingale $\{M_n(c)\}_n$ constructed with the multiplication factors
%$(1-c\cdot(T-\mu)/(\tau-\mu))$
%and let $L_n(c)=\log(M_n(c))$.
Consider $%L_n:c\mapsto 
	L_n(c)=\log(M_n(c))$.
Any realization 
%$\ell_n$ 
of $L_n(c)$ (based on observations of $T_1,\ldots,T_n$) 
is concave in $c$.
Thus $M_n(c)$ is a log-concave in $c$.
\end{lemma}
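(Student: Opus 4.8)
The plan is to reduce everything to the single-variable function $g_z(c)=\log(1-c\,z)$, $c\in[0,1)$, applied to $z=Z:=(T-\mu)/(\tau-\mu)$, which satisfies $Z\le1$ a.s. because $T\le\tau$ a.s.. Then $\lambda(c)=\E(g_Z(c))$. First I would record the pointwise derivatives $g_z'(c)=-z/(1-c\,z)$ and $g_z''(c)=-z^2/(1-c\,z)^2$ and observe that, for $z\le1$ and $c$ ranging over a compact subinterval $[a,b]\subset(0,1)$, the denominator satisfies $1-c\,z\ge1-b>0$, so both $g_z'(c)$ and $g_z''(c)$ are bounded uniformly in $z\le1$ by constants depending only on $a,b$. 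This uniform local bound is exactly what is needed to justify differentiation under the expectation.

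Next I would apply the standard theorem on differentiation under the integral sign (or, equivalently, the mean value theorem together with dominated convergence) on each compact $[a,b]\subset(0,1)$ to conclude that $\lambda$ is twice differentiable on $(0,1)$ with $\lambda'(c)=\E(-Z/(1-c\,Z))$ and $\lambda''(c)=-\E((Z/(1-c\,Z))^2)$. Since the integrand in $\lambda''$ is nonnegative, $\lambda''(c)\le0$, which gives concavity. For strict concavity I would note that $\lambda''(c)=0$ forces $Z/(1-c\,Z)=0$ a.s., hence $Z=0$ a.s., i.e. $T=\mu$ a.s.; so under the hypothesis $\P\{T=\mu\}<1$ we have $\lambda''(c)<0$ throughout $(0,1)$ and $\lambda$ is strictly concave. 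Continuity of $\lambda$ on the open interval $(0,1)$ is then automatic from concavity.

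The endpoint behaviour at $c=0$ I would treat separately by dominated convergence, which is where the explicit bounds already displayed in the text do the work. For $0\le c\le\tfrac12$ and $z\le1$ one has $|\log(1-c\,z)|\le\log2+|z|$ and $-1/(1-c)\le-z/(1-c\,z)\le-z$; the right-hand dominating functions $\log2+|Z|$ and $2+|Z|$ are integrable because $T$, hence $Z$, is integrable. Letting $c\downarrow0$ under the expectation then yields $\lambda(c)\to0=\lambda(0)$, giving continuity on $[0,1)$, and $\lambda'(c)\to\E(-Z)=-(\E(T)-\mu)/(\tau-\mu)$, which is the claimed derivative limit.

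Finally, the statement about $L_n$ is purely deterministic and requires no expectation. For a fixed realization $t_1,\dots,t_n$ one has $L_n(c)=\sum_{i=1}^n\log(1-c\,(t_i-\mu)/(\tau-\mu))$, and each summand is concave in $c$ by the same computation $g_z''(c)\le0$ applied pointwise with $z=(t_i-\mu)/(\tau-\mu)\le1$. A finite sum of concave functions is concave, so $L_n$ is concave on $[0,1)$; equivalently $M_n(c)=\exp(L_n(c))$ is log-concave, as claimed. I expect the only genuine subtlety to be the justification of differentiation under the expectation: the dominating bounds degenerate as $c\downarrow0$ or $c\uparrow1$, so one must argue on compact subintervals rather than on all of $(0,1)$ at once, but this is routine.
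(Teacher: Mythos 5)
Your proof is correct and follows essentially the same route as the paper, which justifies Lemma \ref{L1} in the discussion preceding it: differentiation under the expectation using the uniform boundedness of the first and second derivatives of $\log(1-c\,z)$ in $z\in(-\infty,1]$, concavity from $\lambda''(c)=-\E\bigl[(Z/(1-cZ))^2\bigr]\le0$ (strict unless $Z=0$ a.s.), dominated convergence via $-1/(1-c)\le -Z/(1-cZ)\le -Z$ for the behaviour at $c=0$, and pointwise concavity of each summand $\log(1-c\,(t_i-\mu)/(\tau-\mu))$ for the claim about $L_n$. Your restriction to compact subintervals $[a,b]\subset(0,1)$ and the explicit dominating functions merely make the paper's ``supposing differentiation behaves decently'' step precise, which is a welcome but not substantively different refinement.
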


\begin{corollary}\label{C2}
Suppose $\P\{T=\mu\}<1$.
If $\E(T)\ge\mu$ and $0<c<1$, then $\lambda(c)<0$, so that $\lim_{n\to\infty}M_n(c)=0$.
If $\E(T)<\mu$, then there is $c_{\max}>0$ such that for $0<c<c_{\max}$ we have $\lambda(c)>0$
implying that $\lim_{n\to\infty}M_n(c)=\infty$ a.s..
\end{corollary}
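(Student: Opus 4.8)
The plan is to combine the strong law of large numbers with the analytic properties of $\lambda$ recorded in Lemma \ref{L1}. The starting observation is that, since $T_1,T_2,\ldots$ is an iid sample, the summands in $\frac1n\log(M_n(c))=\frac1n\sum_{i=1}^n\log(1-c(T_i-\mu)/(\tau-\mu))$ are iid with common mean $\lambda(c)$. First I would verify that this summand is genuinely integrable (not merely that $\lambda(c)$ is well defined), so that Kolmogorov's strong law applies. Writing $Z=(T-\mu)/(\tau-\mu)\le1$, on the one hand $1-cZ\ge1-c>0$ forces $\log(1-cZ)\ge\log(1-c)$, bounding the negative part by a constant; on the other hand the elementary inequality $\log(1+u)\le u$ gives $\log(1-cZ)\le -cZ$, so the positive part is dominated by $c\,Z^-$, which is integrable because $T$ is. Hence $\E|\log(1-cZ)|<\infty$ and $\frac1n\log(M_n(c))\to\lambda(c)$ a.s. It then remains only to determine the sign of $\lambda(c)$, since $\lambda(c)<0$ forces $M_n(c)\to0$ and $\lambda(c)>0$ forces $M_n(c)\to\infty$, both almost surely.

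For the first assertion ($\E(T)\ge\mu$) I would use that Lemma \ref{L1} supplies $\lambda(0)=0$, the limiting slope $\lim_{c\downarrow0}\lambda'(c)=-(\E(T)-\mu)/(\tau-\mu)\le0$, and strict concavity (available because $\P\{T=\mu\}<1$). Strict concavity makes $\lambda'$ strictly decreasing on $(0,1)$, so for any $c\in(0,1)$, inserting an intermediate point $c_0\in(0,c)$ gives $\lambda'(c)<\lambda'(c_0)\le\lim_{s\downarrow0}\lambda'(s)\le0$. Thus $\lambda$ is strictly decreasing on $(0,1)$, and by continuity at $0$ we conclude $\lambda(c)<\lambda(0)=0$. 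This argument handles the delicate boundary case $\E(T)=\mu$, where the initial slope vanishes, uniformly with the case $\E(T)>\mu$, and it is precisely here that the hypothesis $\P\{T=\mu\}<1$ is needed.

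For the second assertion ($\E(T)<\mu$), Lemma \ref{L1} gives the strictly positive limiting slope $\lim_{c\downarrow0}\lambda'(c)=(\mu-\E(T))/(\tau-\mu)>0$. Hence $\lambda'$ is positive on some interval $(0,c_{\max})$, and since a concave function on $(0,1)$ is the integral of its derivative, $\lambda(c)=\int_0^c\lambda'(s)\,ds>0$ for $0<c<c_{\max}$. Combining this with the strong law from the first paragraph yields $M_n(c)\to\infty$ a.s. for such $c$.

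The step I expect to require the most care is the integrability check underpinning the strong law: because $T$ is assumed bounded only from above, the summand $\log(1-cZ)$ is unbounded above as $T\to-\infty$, so one cannot merely invoke boundedness and must instead control the positive part through the integrability of $T$ as above. The sign analysis of $\lambda$, by contrast, is a routine consequence of Lemma \ref{L1} once the boundary case is dispatched via strict concavity.
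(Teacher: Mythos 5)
Your proof is correct and follows essentially the same route as the paper's: determine the sign of $\lambda(c)$ from the properties in Lemma \ref{L1} (strict concavity and the limiting slope $-(\E(T)-\mu)/(\tau-\mu)$ at $0$), then apply the strong law of large numbers to $\frac1n\log(M_n(c))$. The only differences are cosmetic refinements — you verify integrability of $\log(1-cZ)$ explicitly (the paper leaves this implicit, though its bound $\log(1-c)\le\lambda(c)$ points the same way) and you treat $\E(T)=\mu$ and $\E(T)>\mu$ in a single argument where the paper splits them.
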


\begin{proof}
Suppose $\E(T)=\mu$, but $\P\{T=\mu\}<1$, then $\lambda'(0)=0$ and 
from the strict concavity of $\lambda(c)$ in $c$, it follows that $\lambda(c)<0$ for all $0<c<1$.
If $\E(T)>\mu$, then $\lambda'(0)<0$ and again  it follows that $\lambda(c)<0$ for all $0<c<1$.
According to the strong law of large numbers, it follows that
$\lim_{k\to\infty}\frac1k\log(M_k(c))=\lambda(c)<0$ a.s., 
and therefore that $\lim_{k\to\infty}M_k(c)=0$ a.s. 
(despite the fact that $\E(M_k)=1$ for all $k$, cf. \cite[Ex. 5.2.9]{D}).
If $\E(T)<\mu$, then $\lambda'(0)>0$ and there is $c_{\max}>0$ such that $\lambda(c)>0$
for $0<c<c_{\max}$.
Now $\lim_{k\to\infty}\frac1k\log(M_k(c))=\lambda(c)>0$ a.s. and therefore
$\lim_{k\to\infty}M_k(c)=\infty$ a.s. for $0<c<c_{\max}$.
\end{proof}
\\
\\
%We try to find sufficient conditions enforcing $\lambda(c)>0$.
Suppose that $\tau_0\le T\le\tau_1$ a.s. and $\tau_0<\E(T)=\nu<\mu<\tau_1$.
Concavity of the function
$t\mapsto\log(1-c(t-\mu)/(\tau_1-\mu))$ implies 
that the minimum with respect to the $T$-distribution of $\lambda(c)=\E(\log(1-c(T-\mu)/(\tau_1-\mu))$
is obtained for the distribution $T^A$ concentrated at the endpoints $\tau_0$ and $\tau_1$
of the support such that $\E(T^A)=\nu$ (see Lemma \ref{Jensen+}).
Then $\P(T^A=\tau_1)=(\nu-\tau_0)/(\tau_1-\tau_0)$, $\P(T^A=\tau_0)=(\tau_1-\nu)/(\tau_1-\tau_0)$
and
\begin{align*}
%\lambda(c)=
\lambda^A(c)
&=\E(\log(1-c(T^A-\mu)/(\tau_1-\mu))
\\&=
\frac{\nu-\tau_0}{\tau_1-\tau_0}\log(1-\frac c{\tau_1-\mu}(\tau_1-\mu))
+
\frac{\tau_1-\nu}{\tau_1-\tau_0}\log(1-\frac c{\tau_1-\mu}(\tau_0-\mu)).
\end{align*}
The maximum of $c\mapsto\lambda^A(c)$ is attained at the unique point $c$ where the derivative with respect to $c$ is zero,
that is $c=(\mu-\nu)/(\mu-\tau_0)$ and then
% the multiplication factors in the test martingale are
$$1-c(\tau_1-\mu)/(\tau_1-\mu))=((\nu-\tau_0)/(\tau_1-\tau_0))/((\mu-\tau_0)/(\tau_1-\tau_0))$$
$$1-c(\tau_0-\mu)/(\tau_1-\mu))=((\tau_1-\nu)/(\tau_1-\tau_0))/((\tau_1-\mu)/(\tau_1-\tau_0))$$
so that then $\lambda^A((\mu-\nu)/(\mu-\tau_0))$ equals the 
Kullback-Leibler divergence 
$$D_{\hbox{KL}}(\Alt((\nu-\tau_0)/(\tau_1-\tau_0))\,\|\,\Alt((\mu-\tau_0)/(\tau_1-\tau_0)))$$
from $\Alt((\mu-\tau_0)/(\tau_1-\tau_0))$ to $\Alt((\nu-\tau_0)/(\tau_1-\tau_0))$.

\begin{lemma}\label{L2}
Suppose $\tau_0\le T\le \tau_1$ a.s., and $\tau_0<\E(T)=\nu<\mu<\tau_1$.
Then 
$$\lambda((\mu-\nu)/(\mu-\tau_0)))\ge D_{\hbox{KL}}(\Alt((\nu-\tau_0)/(\tau_1-\tau_0))\,\|\,\Alt((\mu-\tau_0)/(\tau_1-\tau_0)))>0,$$
and $\lambda'((\mu-\nu)/(\mu-\tau_0)))\ge0$.
In particular $\lambda$ reaches its maximum at some $c_{\rm opt}$ 
with $(\mu-\nu)/(\mu-\tau_0)\le c_{\rm opt}\le1$.
If $\P\{T=\tau_1\}>0$, then $\lim_{c\uparrow1}\lambda(c)=-\infty$.
\end{lemma}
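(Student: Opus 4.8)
The plan is to lean entirely on two tools already in hand: the two-point reduction of Lemma~\ref{Jensen+} and the explicit evaluation of $\lambda^A$ carried out immediately before the statement. Throughout write $c_0=(\mu-\nu)/(\mu-\tau_0)$, noting $0<c_0<1$ because $\tau_0<\nu<\mu$, and recall that $Z=(T-\mu)/(\tau_1-\mu)$ ranges in $[z_{\min},1]$ with $z_{\min}=-(\mu-\tau_0)/(\tau_1-\mu)<0$, while $Z^A$ denotes the two-point distribution on the endpoints with mean $\nu$. For the value inequality, observe that $t\mapsto\log(1-c_0(t-\mu)/(\tau_1-\mu))$ is concave on $[\tau_0,\tau_1]$, so Lemma~\ref{Jensen+} gives $\lambda^A(c_0)\le\lambda(c_0)$. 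The computation preceding the lemma shows that $c_0$ is exactly the maximizer of $\lambda^A$ and that $\lambda^A(c_0)$ equals the stated Kullback--Leibler divergence; since $\nu\ne\mu$ the two Bernoulli parameters differ, so that divergence is strictly positive. Chaining these yields $\lambda(c_0)\ge\lambda^A(c_0)=D_{\hbox{KL}}(\cdots)>0$.

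The derivative inequality is the crux. Differentiating under the expectation (legitimate by the domination already noted before Lemma~\ref{L1}) I would write $\lambda'(c)=\E(g_c(Z))$ with $g_c(z)=-z/(1-cz)$. A one-line computation yields $g_c''(z)=-2c/(1-cz)^3$, which is negative for every $z$ in the support of $Z$, since there $1-c_0z\ge1-c_0>0$; hence $g_{c_0}$ is concave on $[z_{\min},1]$. Applying Lemma~\ref{Jensen+} a second time, now to $g_{c_0}$, gives $\lambda'(c_0)=\E(g_{c_0}(Z))\ge\E(g_{c_0}(Z^A))=(\lambda^A)'(c_0)$. But $c_0$ is the interior maximizer of the concave, differentiable function $\lambda^A$, so $(\lambda^A)'(c_0)=0$, whence $\lambda'(c_0)\ge0$.

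For the two consequences, Lemma~\ref{L1} tells us $\lambda$ is concave on $(0,1)$, so $\lambda'$ is non-increasing; combined with $\lambda'(c_0)\ge0$ this confines the maximizer $c_{\rm opt}$ to $[c_0,1]$. For the final claim I would isolate the atom at $\tau_1$: writing $\lambda(c)=\P\{T=\tau_1\}\log(1-c)+\E\big(\log(1-cZ)\mathbf 1_{\{Z<1\}}\big)$, the second term is at most $\log(1-c\,z_{\min})\le\log\big(1+(\mu-\tau_0)/(\tau_1-\mu)\big)$, a constant, while the first tends to $-\infty$ as $c\uparrow1$ whenever $\P\{T=\tau_1\}>0$; hence $\lambda(c)\to-\infty$.

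The only genuinely non-routine step is the derivative inequality, and the whole trick there is to recognize $\lambda'$ itself as an expectation of a concave function of $Z$, so that Lemma~\ref{Jensen+} applies a second time and the desired bound collapses to the vanishing of $(\lambda^A)'$ at its own maximizer. Everything else is concavity bookkeeping.
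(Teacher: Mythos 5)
Your proposal is correct and follows essentially the same route as the paper's own proof: the two-point reduction of Lemma~\ref{Jensen+} for the value bound, the recognition of $\lambda'(c)=\E(-Z/(1-cZ))$ as the expectation of a concave function of $Z$ so that the same reduction gives $\lambda'(c_0)\ge(\lambda^A)'(c_0)=0$, concavity from Lemma~\ref{L1} to locate $c_{\rm opt}$, and isolation of the atom at $\tau_1$ for the divergence as $c\uparrow1$. The only difference is cosmetic: your bound $\log\bigl(1+(\mu-\tau_0)/(\tau_1-\mu)\bigr)$ for the non-atomic part is the (correctly signed) version of the bound displayed in the paper.
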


\begin{proof}
We have already shown the lower bound for $\lambda((\mu-\nu)/(\mu-\tau_0)))$.
The last claim of the Lemma follows from the upper bound 
$$\lambda(c)\le\P\{T=\tau_1\}\log(1-c)+\log(1-(\tau_0-\mu)/(\mu-\tau_1).$$ 
To prove that $\lambda'((\mu-\nu)/(\tau_1-\mu))\ge0$,
consider $Z=(T-\mu)/(\tau_1-\mu)$. We have seen that
$$
\lambda'(c)=\frac{\partial}{\partial c}\E(\log(1-cZ))=\E(-Z/(1-cZ)).%\ge0.
$$
Now $z\mapsto -z/(1-cz)$ satisfies 
$\frac{\partial^2}{\partial z^2} (-z/(1-cz))=-2c/(1-cz)^3<0$
so that it is a concave function.
In particular the minimal value of $\lambda'(c)=\E(-Z/(1-cZ))$ with respect to distributions of $T$
with $\tau_0\le T\le \tau_1$ a.s. and $\E(T)=\nu$, is attained at rv $T^A$ having the alternative distribution 
with support in $\{\tau_0,\tau_1\}$,
in which case
we have seen that $\lambda^A$ attains its maximum at $c=(\mu-\nu)/(\mu-\tau_0))$.
So 
$\lambda'((\mu-\nu)/(\mu-\tau_0))\ge(\lambda^A)'((\mu-\nu)/(\mu-\tau_0))=0$. 
% so that in the general situation $\lambda'((\mu-\nu)/(\tau_1-\mu))\ge0$.
From lemma \ref{L1} we know that $\lambda(c)$ is a concave function in $c$.
In particular, if $\lambda'(c)\ge0$, then the maximum is attained at some $c_{\rm opt}\ge c$.
\end{proof}
\\
\\
%We conclude as follows:
Suppose that $0 \le T\le 1$ a.s. and $\E(T)=\nu<\mu$.
For which $c$ do we have
$$\nu\log(1-c\cdot(1-\mu)/(1-\mu))+(1-\nu)\log(1-c\cdot(0-\mu)/(1-\mu))>0  ~ ? $$
%If we fix such $c$, the power of the test procedure is 1.
\iffalse
mu<-0.05;nu<-0.02
b<-seq(0,0.95,length=100)
y<-nu*log(1-b*(1-mu))+(1-nu)*log(1-b*(-mu))
plot(b,y,type="l")
uniroot(function(b)nu*log(1-b*(1-mu))+(1-nu)*log(1-b*(-mu)),c(0.001,1))
optim(0.5,function(b)-(nu*log(1-b*(1-mu))+(1-nu)*log(1-b*(-mu))))

\fi
For $\mu=0.05$ and $\nu=0.02$, and $T=\Alt(\nu)$ we find $0<c<c_{\max}=0.895$ and 
we find a maximum 0.012 at $c=c_{\rm opt}=(\mu-\nu)/\mu=0.60$.
%
%
%
\iffalse
mu<-0.05;d<-mu/(1-mu); # dan mu==d/(1+d)
cs<-seq(0,1-1e-5,length=1000)
plot(cs,1-cs,type='l',ylab='nu/mu')
lines(cs,(1+d)/d*log(1+cs*d)/(log(1+cs*d)-log(1-cs)))
d<-0.1
lines(cs,(1+d)/d*log(1+cs*d)/(log(1+cs*d)-log(1-cs)),col='red')

\fi
%
%
%

%%%%%%%%%%%%%%%%%%%%%%%%%%%%%%%%%%%%%%%%%%%%%%%%%%%%%%%%%%%%%%%%%%%%%%%%%%%%%%
\subsubsection*{Probability of Type I error}
Given a test martingale $\{M_k\}_k$ for testing $H_0:\E(T)\ge\mu$ 
at the signifcance level $\alpha$ we reject $H_0$ at the time $k$ that
$M_k\ge1/\alpha$.
It is clear that the size (i.e. probability of Type I error) of the test is %less than 
$\P\{\exists k:M_k\ge1/\alpha\}\le\alpha$.
We will give a lower bound for the size % probability of rejecting $H_0:\E(T)\ge\mu$,
under the additional conditions that the support of $T$ is bounded from below, that $\E(T)=\mu$
and $\P(T=\mu)<1$.

\begin{theorem}\label{size}
Let $T$ be a random variable such that $\tau_0\le T\le\tau_1$ a.s., and let $0<c<1$
and consider the test martingale $\{M_k(c)\}_k$ with 
multiplication factor $(1-c(T_k-\mu)/(\tau_1-\mu))$
at time $k$.
Suppose $\E(T)=\mu$, but $\P\{T=\mu\}<1$.
%Then $\lim_{t\to\infty}M_t=0$ a.s.
%Let $M_t^\alpha=M_t$ if $M_s<1/\alpha$ for all $s\le t$, and $M_t^\alpha=M_r$
%if $r=\min\{s;M_s>1/\alpha\}$.
%Then $\{M_t^\alpha\}$ is a martingale, and its almost sure limit $M^\alpha_\infty$ exists. 
\\
The probability of Type I error of the test is less than $\alpha$ but greater than 
$\alpha/(1-c(\tau_0-\mu)/(\tau_1-\mu))>\alpha(\tau_1-\mu)/(\tau_1-\tau_0)$.
\end{theorem}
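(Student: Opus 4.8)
The plan is to analyze the stopped martingale at the rejection time and to exploit that the overshoot past the boundary $1/\alpha$ is bounded. First note that, since $\E(T)=\mu$, the process $\{M_k(c)\}_k$ is actually a martingale with $\E(M_k)=1$ for every $k$, because each factor has conditional expectation $1-c(\E(T)-\mu)/(\tau_1-\mu)=1$. Let $N$ be the first time $k$ with $M_k\ge1/\alpha$, so that the size of the test is $p:=\P\{N<\infty\}$, and write $K=1-c(\tau_0-\mu)/(\tau_1-\mu)$ for the largest value a single factor can take (attained at $T_k=\tau_0$). The first step is a two-sided bound at the crossing: on $\{N<\infty\}$ one has $M_{N-1}<1/\alpha$, while the factor taken at time $N$ is at most $K$, hence
$$\frac1\alpha\le M_N<\frac1\alpha\,K\qquad\text{on }\{N<\infty\}.$$

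The crux is to upgrade $\E(M_k)=1$ into an exact statement about $M_N$. I would apply optional stopping to $\{M_{k\wedge N}\}_k$, which satisfies $\E(M_{k\wedge N})=1$ for all $k$. For $k<N$ we have $M_{k\wedge N}=M_k<1/\alpha$, and for $k\ge N$ we have $M_{k\wedge N}=M_N<K/\alpha$; thus the stopped process is uniformly bounded by $K/\alpha$, hence uniformly integrable. By Corollary \ref{C2}, the hypotheses $\E(T)=\mu$ and $\P\{T=\mu\}<1$ force $M_k\to0$ a.s., so on $\{N=\infty\}$ we have $M_{k\wedge N}=M_k\to0$, while on $\{N<\infty\}$ we have $M_{k\wedge N}\to M_N$. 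Passing to the limit (legitimate by uniform integrability) yields the key identity
$$\E\!\left(M_N\,\mathbf{1}_{\{N<\infty\}}\right)=\lim_{k\to\infty}\E(M_{k\wedge N})=1.$$

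With this identity both estimates follow. Since the identity forces $p>0$, combining it with the strict bound $M_N<K/\alpha$ on $\{N<\infty\}$ gives $1<(K/\alpha)\,p$, i.e. $p>\alpha/K$; the trailing inequality $\alpha/K>\alpha(\tau_1-\mu)/(\tau_1-\tau_0)$ then comes from $c<1$ and $\tau_0-\mu<0$, which give $K<1-(\tau_0-\mu)/(\tau_1-\mu)=(\tau_1-\tau_0)/(\tau_1-\mu)$. For the upper estimate, combining the identity with $M_N\ge1/\alpha$ gives $1\ge p/\alpha$, i.e. $p\le\alpha$; this weaker form is of course also immediate from the maximal lemma (Lemma \ref{maximal lemma}).

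The main obstacle is the \emph{strict} upper bound $p<\alpha$. Equality $1=\E(M_N\mathbf{1}_{\{N<\infty\}})=p/\alpha$ would require $M_N=1/\alpha$ almost surely on $\{N<\infty\}$, i.e. the crossing never overshoots. The natural route is to argue that, because $\P\{T=\mu\}<1$ with $\E(T)=\mu$ forces genuine moves on both sides of $\mu$ and $T_N$ is independent of the pre-crossing value $M_{N-1}$, the crossing value cannot be pinned to the single point $1/\alpha$, so $M_N>1/\alpha$ holds with positive probability and the inequality becomes strict. I expect this to be the delicate point: it is precisely an overshoot (no-lattice) phenomenon, and some nondegeneracy beyond $\P\{T=\mu\}<1$ appears to be required, since for a two-point $T$ whose two factors are $u>1$ and $d\in(0,1)$ with $\log u/\log d$ rational and $1/\alpha$ an exact power of $u$, every crossing lands exactly on $1/\alpha$ and one computes $p=\alpha$. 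The safe target is therefore $p\le\alpha$ in general, with $p<\alpha$ whenever the overshoot is strictly positive with positive probability.
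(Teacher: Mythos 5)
Your argument is essentially the paper's own proof: the paper also stops the martingale at the first crossing of $1/\alpha$, observes via Corollary \ref{C2} that the stopped process $M^\alpha_k$ converges a.s. to $0$ on $\{N=\infty\}$ and to $M_N\in[\alpha^{-1},K\alpha^{-1})$ on $\{N<\infty\}$ (with $K=1-c(\tau_0-\mu)/(\tau_1-\mu)$), and uses dominated convergence (your uniform integrability) to get $\E(M^\alpha_\infty)=1$, from which $\alpha^{-1}p\le 1<K\alpha^{-1}p$. Your key identity $\E\bigl(M_N\mathbf{1}_{\{N<\infty\}}\bigr)=1$, the lower bound $p>\alpha/K$, and the derivation of $K<(\tau_1-\tau_0)/(\tau_1-\mu)$ from $c<1$ all coincide with the paper's computation.

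Where you decline to follow the statement --- the strict bound $p<\alpha$ --- your caution is justified, and this is worth recording: the paper's own display gives only the non-strict $\alpha^{-1}\P\{M^\alpha_\infty\ge\alpha^{-1}\}\le\E(M^\alpha_\infty)=1$, so it too proves only $p\le\alpha$, and strictness genuinely fails in lattice cases. A concrete instance of your counterexample: take $\tau_0=0$, $\tau_1=1$, $\mu=2/3$, $c=1/2$, so the two factors are $2$ (at $T=0$, probability $1/3$) and $1/2$ (at $T=1$, probability $2/3$); with $1/\alpha=2^m$ the process $\log_2 M_k$ is a nearest-neighbour random walk, every crossing lands exactly on $1/\alpha$, and the Type I error equals $\alpha$ exactly. (One quibble: your stated sufficient condition --- $\log u/\log d$ rational and $1/\alpha$ a power of $u$ --- is not quite enough, since with factors $4$ and $1/2$ overshoots still occur; you need the up-step to be a single lattice unit, as when $d=1/u$.) So the defensible form of the theorem is $\alpha/K<p\le\alpha$, which is precisely what you prove.
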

In particular, if $\E(T)=\mu$ and $(\mu-\tau_0)/(\tau_1-\tau_0)$ is small,
the null hypothesis $H_0:\E(T)\ge\mu$ wil be rejected with
probability close to  (but less than) $\alpha$.

\begin{proof}
The process $\{M_k\}_k=\{M_k(c)\}_k$ is a martingale.
Let $M^\alpha_k=M_k$ if $M^*_k<1/\alpha$ and $M^\alpha_k=M_n$ if $k\ge n$,
$M_n\ge1/\alpha$
and $M_{n-1}^*<1/\alpha$.
Then $\{M^\alpha_k\}_k$ is a stopped martingale, and therefore a martingale.
Let $M^\alpha_\infty=\lim_{k\to\infty}M^\alpha_k$, 
then (see Cor. \ref{C2})  $M^\alpha_\infty$ takes values in the 
set $\{0\}\cup[\alpha^{-1},(1-c(\tau_0-\mu)/(\tau_1-\mu))\alpha^{-1})$. 
Because of dominated convergence we have
$1=\E(M_0)=\lim_{k\to\infty}\E(M^\alpha_k)=\E(M^\alpha_\infty)$
and 
$\alpha^{-1}\P\{M^\alpha_\infty\ge\alpha^{-1}\}\le\E(M^\alpha_\infty)<
(1-c(\tau_0-\mu)/(\tau_1-\mu))\alpha^{-1}\P\{M^\alpha_\infty\ge\alpha^{-1}\}$.
\end{proof}

%%%%%%%%%%%%%%%%%%%%%%%%%%%%%%%%%%%%%%%%%%%%%%%%%%%%%%%%%%%%%%%%%%%%%%%%%%%%%%
\subsubsection*{Power}
First we present a result showing that power equal to 1 is attainable.

\begin{theorem}[Consistency]\label{Consistency}
Suppose $\E(T)<\mu<\tau_1$ and $T\le\tau_1$ a.s..
Then there is $c_{\max}>0$ such that $\lim_{n\to\infty} M_n(c)=\infty$ a.s. for $0<c<c_{\max}$.
Let $\pi$ be a probability density on $[0,1]$ such that $\pi(c)>0$ for $c\in(0,\varepsilon)$
for some $\varepsilon>0$.
%$\int_0^\varepsilon d\pi(c)>0$ for all $\varepsilon>0$.
Then the integrated test martingale $\{M_n(\pi)\}_n$
satisfies $\lim_{n\to\infty} M_n(\pi)=\infty$ a.s..
In particular, the test based on $\{M_n(\pi)\}_n$ is consistent, i.e. the power of the test is 1.
\end{theorem}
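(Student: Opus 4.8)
The plan is to reduce the first assertion directly to Corollary \ref{C2} and then to promote the pointwise (in $c$) convergence of $M_n(c)$ to divergence of the integrated process $M_n(\pi)$ by a Tonelli--Fatou argument. First, since $\E(T)<\mu<\tau_1$ we automatically have $\P\{T=\mu\}<1$, so Corollary \ref{C2} applies and furnishes a $c_{\max}>0$ with $\lambda(c)>0$, and hence $\lim_{n\to\infty}M_n(c)=\infty$ a.s., for every fixed $c\in(0,c_{\max})$. This is exactly the first claim, and the strong law of large numbers underlying it (via $\frac1n\log M_n(c)\to\lambda(c)$) is the same one invoked in the proof of Corollary \ref{C2}.

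For the integrated process I would fix an interval $[a,b]$ with $0<a<b<\min(\varepsilon,c_{\max})$; on such an interval $\pi(c)>0$, so the finite measure $\d\mu_\pi=\pi(c)\,\d c$ assigns positive mass to $[a,b]$. Since every factor $1-c(T_i-\mu)/(\tau_1-\mu)$ is nonnegative for $0\le c<1$, we have $M_n(c)\ge0$ and therefore the lower bound $M_n(\pi)\ge\int_a^bM_n(c)\pi(c)\,\d c$, so it suffices to show the right-hand side tends to $\infty$ a.s.. The step I expect to be the main obstacle is precisely this: the convergence $M_n(c)\to\infty$ supplied by Corollary \ref{C2} is almost sure \emph{for each fixed $c$}, with an exceptional null set that a priori depends on $c$, and an uncountable intersection of such null sets need not be negligible.

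I would resolve this with Tonelli's theorem on the product $\Omega\times[a,b]$ equipped with $\P\times\mu_\pi$. The set $A=\{(\omega,c):M_n(c,\omega)\not\to\infty\}$ is measurable, being built from the jointly measurable maps $(\omega,c)\mapsto M_n(c,\omega)$ (continuous in $c$, measurable in $\omega$), and each vertical slice $A_c$ has $\P(A_c)=0$ by the first paragraph; hence $(\P\times\mu_\pi)(A)=\int_a^b\P(A_c)\,\d\mu_\pi=0$. Consequently, for $\P$-almost every $\omega$ the horizontal slice $\{c\in[a,b]:M_n(c,\omega)\not\to\infty\}$ is $\mu_\pi$-null, i.e. $M_n(c,\omega)\to\infty$ for $\mu_\pi$-a.e. $c$. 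Fixing such an $\omega$ and applying Fatou's lemma to the nonnegative functions $c\mapsto M_n(c,\omega)$ on $([a,b],\mu_\pi)$ yields
$$\liminf_{n\to\infty}\int_a^bM_n(c,\omega)\pi(c)\,\d c\ge\int_a^b\liminf_{n\to\infty}M_n(c,\omega)\,\pi(c)\,\d c=\infty,$$
because the integrand is $+\infty$ on a set of positive $\mu_\pi$-measure. Hence $M_n(\pi,\omega)\to\infty$ for $\P$-a.e. $\omega$. Consistency is then immediate: whenever $\E(T)<\mu$, the event $M_n(\pi)\to\infty$ has probability $1$, so the rejection threshold $M_n(\pi)\ge1/\alpha$ is eventually attained almost surely, and $H_0$ is rejected with probability $1$.
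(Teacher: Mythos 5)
Your proof is correct, but it takes a genuinely different route from the paper. You correctly identified the crux — the exceptional null set in Corollary \ref{C2} depends on $c$, and one cannot intersect uncountably many null sets — and you resolved it with the standard measure-theoretic device: joint measurability of $(\omega,c)\mapsto M_n(c,\omega)$ (a Carath\'eodory function), Tonelli on $\P\times\mu_\pi$ to conclude that for a.e.\ $\omega$ the divergence $M_n(c,\omega)\to\infty$ holds for $\mu_\pi$-a.e.\ $c\in[a,b]$, and then Fatou to push the divergence through the integral. The paper instead exploits structure specific to these processes: by Lemma \ref{L1}, each realization of $c\mapsto M_n(c)$ is log-concave, so it suffices to know $M_n(a)\to\infty$ and $M_n(b)\to\infty$ a.s.\ at the \emph{two} endpoints (only two null sets to discard); log-concavity then gives $M_n(c)>R/p$ simultaneously for all $c\in[a,b]$ once it holds at $a$ and $b$, whence $M_n(\pi)\ge\int_a^b M_n(c)\pi(c)\,\d c>R$. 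The trade-off: the paper's argument is shorter and avoids product-measure technicalities, and it yields a uniform (not just a.e.-in-$c$) lower bound on the interval, but it leans on the log-concavity lemma; your argument ignores all structure in $c$ beyond nonnegativity and joint measurability, so it would survive verbatim for any family of test processes with pointwise a.s.\ divergence on a set of positive $\pi$-measure — a genuinely more general, if heavier, mechanism.
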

\begin{proof}
The first claim follows from Corollary \ref{C2}.
Let $0<a<b<\min(\varepsilon,c_{\max})$, then $p=\int_a^b\pi(c)dc>0$.
We have $\lim_{n\to\infty} M_n(c)=\infty$ a.s. for $c=a,b$.
Be given any $R>0$,
let $N$ be such that $M_N(a)>R/p$ and $M_N(b)>R/p$. 
Since $M_N(c)$
is a log-concave function in $c$ we have $M_N(c)>R/p$ for all $c\in[a,b]$, so that
$M_N(\pi)>(R/p)\cdot p=R$.
\end{proof}
\\
\\
The above theorem is applicable if one considers the uniform probability density $\pi$
on [0,1].
If one is convinced that $\E(T)\le\nu<\mu$, one can considerably improve the efficiency
of the test.
According to Lemma \ref{L2} there is some $c=c_{\rm{opt}}$ with 
$(\mu-\nu)/(\mu-\tau_0)\le c_{\rm{opt}}\le1$
for which $\E(\log(1-c(T-\mu)/(\tau_1-\mu)))$ and therefore $\E(\log(M_n(c)))$ is optimal.
This suggests to consider the integrated test martingale with respect to the uniform probability density $\pi$ on $[(\mu-\nu)/(\mu-\tau_0),1]$.
See Table \ref{T1} for its performance.

\begin{theorem}\label{Th1}
Suppose $\tau_0\le T\le \tau_1$ a.s., and $\E(T)\le\nu$ with $\tau_0<\nu<\mu<\tau_1$.
The power of the test with test martingale 
$\{M_k(c)\}_k$ is 1 if $0<c\le(\mu-\nu)/(\mu-\tau_0)$.
Let $\pi$ be a probability density on $[(\mu-\nu)/(\mu-\tau_0),1]$ such that
$\pi(c)>0$ for $c\in[(\mu-\nu)/(\mu-\tau_0),(\mu-\nu)/(\mu-\tau_0)+\varepsilon)$ for some 
$\varepsilon>0$,
% $\int_{(\mu-\nu)/(\mu-\tau_0)}^z\pi(c)dc>0$ for all $z>(\mu-\nu)/(\mu-\tau_0)$,
%for all $c$, 
then the test based on $\{M_n(\pi)\}_n$ has power 1.
\end{theorem}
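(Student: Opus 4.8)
The plan is to reduce everything to the sign of the function $\lambda(c)=\E(\log(1-c(T-\mu)/(\tau_1-\mu)))$ and then to reuse the strong-law argument of Corollary~\ref{C2} for the fixed-$c$ claim and the log-concavity argument of Theorem~\ref{Consistency} for the integrated martingale. Write $c_0=(\mu-\nu)/(\mu-\tau_0)$ and $\nu^*=\E(T)\le\nu$. First I would dispose of the degenerate case $\nu^*=\tau_0$: then $T=\tau_0$ a.s., so $\lambda(c)=\log(1+c(\mu-\tau_0)/(\tau_1-\mu))>0$ for every $0<c<1$ and both claims are immediate. So I may assume $\tau_0<\nu^*<\mu<\tau_1$, which is exactly the setting in which Lemma~\ref{L2} applies, now with $\nu$ replaced by the true mean $\nu^*$.

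For the first claim I would note that Lemma~\ref{L2} gives $\lambda((\mu-\nu^*)/(\mu-\tau_0))>0$, while $\nu^*\le\nu$ forces $c_0\le(\mu-\nu^*)/(\mu-\tau_0)\le1$. Since $\lambda$ is concave on $(0,1)$ with $\lambda(0)=0$ (Lemma~\ref{L1}), the chord bound yields $\lambda(c)\ge(c/d)\lambda(d)>0$ for $0<c\le d:=(\mu-\nu^*)/(\mu-\tau_0)$, in particular for $0<c\le c_0$. The strong law of large numbers then gives $\frac1n\log M_n(c)\to\lambda(c)>0$ a.s., hence $M_n(c)\to\infty$ a.s., so the test based on $\{M_k(c)\}_k$ rejects eventually a.s. and has power $1$, exactly as in Corollary~\ref{C2}.

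For the integrated martingale I would mimic the proof of Theorem~\ref{Consistency}, but anchored just to the right of $c_0$. The point is that $\lambda(c_0)>0$ (again because $c_0\le(\mu-\nu^*)/(\mu-\tau_0)$ and $\lambda>0$ on $(0,(\mu-\nu^*)/(\mu-\tau_0)]$), and that $\lambda$ is continuous on $[0,1)$ by Lemma~\ref{L1}; hence $\lambda>0$ on $[c_0,c_0+\delta)$ for some $\delta>0$. I would then choose $a,b$ with $c_0\le a<b<c_0+\min(\delta,\varepsilon)$, so that $\lambda(a),\lambda(b)>0$ and $p:=\int_a^b\pi(c)\,dc>0$ (using $\pi>0$ on $[c_0,c_0+\varepsilon)$). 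Given $R>0$, almost surely there is $N_0$ with $M_n(a),M_n(b)>R/p$ for all $n\ge N_0$; log-concavity of $c\mapsto M_n(c)$ (Lemma~\ref{L1}) forces the whole chord over $[a,b]$ above $R/p$, so $M_n(c)>R/p$ on $[a,b]$ and $M_n(\pi)\ge\int_a^b M_n(c)\pi(c)\,dc>R$. As $R$ is arbitrary, $M_n(\pi)\to\infty$ a.s. and the test has power $1$.

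I expect the only genuine obstacle to be securing an interval \emph{to the right} of $c_0$ on which $M_n(c)\to\infty$: the first claim only guarantees divergence on $(0,c_0]$, which meets the support $[c_0,1]$ of $\pi$ in the single point $c_0$, and integrating a log-concave family over $\{c_0\}$ gives nothing. What rescues the argument is the extra content of Lemma~\ref{L2} beyond $\lambda(c_0)>0$: the maximizer $c_{\rm opt}$ satisfies $c_{\rm opt}\ge c_0$ and $\lambda'(c_0)\ge0$, so $\lambda$ cannot drop below $0$ immediately past $c_0$. I would therefore be careful to invoke the strict positivity $\lambda(c_0)>0$ (it dominates a positive Kullback--Leibler divergence) together with continuity to produce the genuine interval $[c_0,c_0+\delta)$ that the integration step requires.
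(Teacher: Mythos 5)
Your proof is correct and takes essentially the route the paper intends: the paper gives no separate proof of Theorem~\ref{Th1}, treating it as the combination of Lemma~\ref{L2} (positivity of $\lambda$ at and beyond $(\mu-\nu)/(\mu-\tau_0)$), the SLLN argument of Corollary~\ref{C2} for fixed $c$, and the log-concavity/chord integration trick from the proof of Theorem~\ref{Consistency}, which is exactly what you assembled. Your two refinements---invoking Lemma~\ref{L2} at the true mean $\nu^*=\E(T)\le\nu$ rather than at $\nu$, and handling the degenerate case $\E(T)=\tau_0$ that Lemma~\ref{L2} formally excludes---are precisely the details needed to make the paper's implicit argument rigorous.
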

%
%

\iffalse
mu<-0.05;M<-100
p<-M/(M+1)
eps<-2*mu*(1-mu)*M/(1+M)/(M+mu^2)
p*log(1-eps)+(1-p)*log(1-eps/(1-mu)*(-M-mu))
uniroot(function(eps)p*log(1-eps)+(1-p)*log(1-eps/(1-mu)*(-M-mu)),c(mu/M,1),tol=1e-10)
eps<-seq(0,2*mu/(1-mu)/(mu+M),length=1000);plot(eps,p*log(1-eps)+(1-p)*log(1-eps/(1-mu)*(-M-mu)),type='l')

\fi
%%%%%%%%%%%%%%%%%%%%%%%%%%%%%%%%%%%%%%%%%%%%%%%%%%%%%%%%%%%%%%%%%%%%%%%%%%%%%%
\subsubsection*{Average sample number}
We took $\mu=0.05$, significance level $\alpha=0.05$ 
and considered the necessary sample number for rejection of $H_0$ for different 
$T$-distributions with $0=\tau_0\le T\le\tau_1=1$, $\E(T)=0.02$ and 
test martingales $\{M_k(c)\}_k$ with $0\le c\le 1$.
The results are compiled in Table \ref{T1}.
The last line starting with $[0.6,1]$ is based on the integrated test martingale 
$\{M_k(\pi)\}_k$ 
over the uniform density $\pi$ on the interval $[0.6,1]$, 
possibly based on a strong conviction
that $\E(T)\le\nu=0.02$, implying that the optimal $c$ is not less than $(\mu-\nu)/\mu=0.6$.
The mentioned average sample numbers thus correspond to tests with power 1.
The average sample number and standard deviations are in each instance based on 1000 test runs.
\\
\noindent
\begin{table}[h]
\begin{tabular}{lcc}
$T:$&\multicolumn{2}{c}{$\Alt(0.02)$}\\
\phantom{0}$c$&\multicolumn{1}{c}{mean}&sd\\
0.2 & 516.2 & 127.4 \\
0.4 & 294.1 & 124.4 \\
0.6 & 245.9 & 169.2 \\
0.8 & 357.7 & 510.5 \\
1 & $\infty$ & -- \\
%\multicolumn{2}{l}{uniform on}
%\\
$[0.6,1]$ & 287.6&253.2
\end{tabular}
\begin{tabular}{cc}
\multicolumn{2}{c}{$5T\sim\Alt(0.10)$}\\
\multicolumn{1}{c}{mean}&sd\phantom{0}\\
482.7 & 45.0 \\
245.5 & 32.7 \\
166.7 & 27.7 \\
127.4 & 24.9 \\
104.0 & 23.4 \\
124.6 & 25.4
\end{tabular}
%\qquad
\begin{tabular}{rr}
\multicolumn{2}{c}{$\hbox{Beta}(0.02,0.98)$}\\
\multicolumn{1}{c}{mean}&sd\phantom{0}\\
495.3 & 81.7 \\
261.3 & 67.8 \\
186.9 & 68.0 \\
156.1 & 79.8 \\
166.2 & 155.0 \\
162.4&85.5
\end{tabular}
\begin{tabular}{rr}
\multicolumn{2}{c}{$\hbox{Beta}(2,98)$}\\
\multicolumn{1}{c}{mean}&sd\phantom{0}\\
476.6 & 10.1 \\
239.6 & 7.1 \\
160.5 & 5.9 \\
121.0 & 5.1 \\
97.2 & 4.5 \\
117.7&5.3
\end{tabular}
\begin{tabular}{r}
$0.02$\\
\multicolumn{1}{c}{=}\\
476 \\
239 \\
160 \\
121 \\
97 \\
117
\end{tabular}
\caption{Average sample numbers (mean) and their standard deviations (sd), 
in each case based on 1000 tests of $H_0:\E(T)\le0.05$ with significance level $\alpha=0.05$.}
\label{T1}
\end{table}
\\
We see confirmed that the optimal $c$ depending on the distribution of $T$ is some number greater than
$(\mu-\nu)/\mu=0.6$ (see Lemma \ref{L2}).
%Our conclusion is that if tainting $\tau_1=1$ does regularly occur, one should choose $c$ close to 
Notice that it is required that $c\le1$ 
in order to have a test martingale.
Notice that, except for the alternative distribution, the integrated test martingale
over the interval $[(\mu-\nu)/\mu,1]$ outperforms
the test martingale corresponding to a fixed parameter $c=(\mu-\nu)/\mu$.
If one has no idea about $\E(T)$ other than $\E(T)<\mu$, then the integrated test martingale, 
integrated uniformly over $[0,1]$,
is a suitable choice.
\\
One may compare the results in Table \ref{T1} with a common practice in financial auditing
as expressed in the Audit Guide Audit Sampling (AICPA, 2012).
With $\alpha$, $\mu$ and $\nu$ as above, Table 4-5 (or C-1) lead to
a sample size of 162. 
One will reject $H_0$ if
the total sum of $T$-values does not exceed 3.24 
(which is the expected value if $\E(T)=\nu=0.02$).
\\
\\
We make a side step to Wald's equation (\cite[Thm. 4.1.5]{D}).
Consider the test martingale $\{M_k(c)\}_k$ constructed as above
with constant $c$ and let $Z=(T-\mu)/(\tau_1-\mu)$. 
Suppose $\E(\log(1-cZ))>0$, so that the power of the test is 1.
Consider the stopping time $N$ where $\{M_k\}_k$ crosses level $1/\alpha$ for the first time.
Since $\E(\log(M_k))=k\E(\log(M_1))=k\,\E(\log(1-cZ))$ and
$\{\log(M_k)-\E(\log(M_k))\}_k$ is a martingale, at stopping time $N$
we have 
$\E(\log(M_N)-\E(\log(M_N)))=\E(\log(M_0)-\E(\log(M_0)))=0$.
So $\log(1/\alpha)-\E(N)\E(\log(1-cZ))\ge0$, that is $\E(N)\le\log(1/\alpha)/\E(\log(1-cZ))$.
\\
It is known as Wald's second equation
(\cite[Thm. 4.1.6]{D}) that
$$\E\left[(\log(M_N)-\E(\log(1-cZ))N)^2\right]=\Var(\log(1-cZ))\E(N)$$
from which 
$$\Var(N)\approx\frac{\Var(\log(1-cZ))\E(N)}{\E(\log(1-cZ))^2}
\approx\frac{\Var(\log(1-cZ))}{\E(\log(1-cZ))^3}\log(1/\alpha).$$
These approximations are in reasonable agreement with the above table.
One reason for this is that 
$0\le M_N-\alpha^{-1}<\alpha^{-1}(1-c(T_N-\mu)/(1-\mu)-1)\le\alpha^{-1}\mu/(1-\mu)\approx0.053\alpha^{-1}$
so that $\alpha^{-1}$ is a good approximation of $M_N$ (cf. Theorem \ref{size}).

%%%%%%%%%%%%%%%%%%%%%%%%%%%%%%%%%%%%%%%%%%%%%%%%%%%%%%%%%%%%%%%%%%%%%%%%%%%%%%
\section{Confidence regions}\label{conf reg}

As one may have noticed we did not include a provision in our tests to avoid infinite sample size,
especially in case $H_0$ is satisfied.
In practice it may be a more important issue to find a suitable confidence upper bound and/or confidence lower bound.

\subsubsection*{Confidence bounds}
Choose a confidence level $(1-\alpha)$ with $0<\alpha<1$, for example $\alpha=0.05$.
We will construct an adapted process $\{U^r_k\}_k$ of $(1-\alpha)$-confidence upper bounds
such that even $\P\{\exists k:\E(T)\ge U_k^r\}\le\alpha$.
Thus one can evaluate sequentially the $U^r_k$ and stop whenever one likes and propose the
minimal value found.
\\
\\
Suppose for each $\mu<\tau_1$ we have maintained a process $\{M^\mu_k\}_k$, such that
$\{M^\mu_k\}_k$ is test martingale for the hypothesis $H_0:\E(T)\ge \mu$ given that $T\le\tau_1$ a.s..
Suppose moreover that $\mu\mapsto M^\mu_k$ is continuous and increasing 
for all $k$ and all values of $T_1,\ldots,T_k$. % (possibly excluding a zero probability subset).
We will call such a family $\{M^\mu_k\}_{k,\mu}$ of test martingales a \emph{suitable} family.
Let $U^r_k$ be the statistic depending on $T_1,\ldots,T_k$ taking the value $\mu^r$ where
$$\mu^r=\infty \hbox{ if } M^\mu_k<1/\alpha\hbox{ for all }\mu,\hbox{ and }
\mu^r=\inf\{\mu\mid M^\mu_k\ge1/\alpha\}\hbox{ otherwise. }
$$
If $U^r_k=\mu^r<\infty$, then it follows from continuity that $M_k^{\mu^r}\ge1/\alpha$.
Then for $p\le\tau_1$ we have identity of events 
$$\{\exists k:p\ge U^r_k\}=\{\exists{k}:M^p_k\ge1/\alpha\}.$$%\subseteq\{\sup_{k}M^p_k\ge1/\alpha\}.$$ 
Suppose $\E(T)=p$,
then it follows that $\P\{\exists k:p\ge U^r_k\}=\P\{\exists{k}:M^p_k\ge1/\alpha\}\le\alpha$.

\begin{theorem}
%Let $T$ be as above, then
The statistics $U^r_k$ satisfy
$\P\{\forall k:\E(T)<U^r_k\}\ge1-\alpha$.
\end{theorem}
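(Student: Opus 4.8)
The plan is to recognize this statement as the confidence assertion dual to the coverage bound already verified in the preceding paragraphs, so that almost all of the work has in fact been done. First I would set $p=\E(T)$ and observe that, because $T\le\tau_1$ a.s., we automatically have $p\le\tau_1$. This is precisely the hypothesis under which the identity of events
$$\{\exists k:p\ge U^r_k\}=\{\exists k:M^p_k\ge1/\alpha\}$$
was established above, so that identity is available here for the true mean $p$.

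Next I would note that the event $\{\forall k:\E(T)<U^r_k\}$ whose probability we wish to bound from below is exactly the complement of $\{\exists k:p\ge U^r_k\}$. By the displayed identity this complement has probability $\P\{\exists k:M^p_k\ge1/\alpha\}$. Since $\{M^p_k\}_k$ is by assumption a test martingale for $H_0:\E(T)\ge p$, and since the true mean does satisfy $\E(T)=p\ge p$, the process $\{M^p_k\}_k$ is a positive supermartingale with $\E(M^p_0)\le1$. Applying Lemma \ref{maximal lemma} with $\lambda=1/\alpha$ then gives $\P\{\exists k:M^p_k\ge1/\alpha\}\le\alpha\E(M^p_0)\le\alpha$. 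Taking complements yields $\P\{\forall k:\E(T)<U^r_k\}=1-\P\{\exists k:p\ge U^r_k\}\ge1-\alpha$, which is the claim.

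There is no genuine computational obstacle; the only point deserving care is conceptual, namely the test-inversion principle underlying the whole construction. The coverage guarantee holds simultaneously for all $k$ — an \emph{anytime-valid} statement — precisely because Lemma \ref{maximal lemma} controls $\P\{\exists k:M^p_k\ge1/\alpha\}$ rather than merely $\P\{M^p_k\ge1/\alpha\}$ at a fixed time. I would therefore be careful to invoke the ``$\exists k$'' form of the maximal bound, not the fixed-$k$ form, since it is exactly this uniformity in $k$ that converts the family of sequential tests into a sequence of confidence upper bounds which one may monitor and stop at will. I would also flag, when stating $p\le\tau_1$, that this is what legitimizes reusing the event identity for the random value $p=\E(T)$ rather than for an arbitrary fixed hypothesized mean.
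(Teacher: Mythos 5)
Your proof is correct and follows essentially the same route as the paper: the coverage claim is obtained by applying the event identity $\{\exists k:p\ge U^r_k\}=\{\exists k:M^p_k\ge1/\alpha\}$ at the true mean $p=\E(T)\le\tau_1$, bounding $\P\{\exists k:M^p_k\ge1/\alpha\}\le\alpha\E(M^p_0)\le\alpha$ via Lemma \ref{maximal lemma} (the ``$\exists k$'' form, exactly as the paper's size bound in Section \ref{Test martingales}), and taking complements. One tiny terminological quibble: $p=\E(T)$ is not a \emph{random} value but a fixed, unknown constant --- the point you want is simply that the identity holds for every $p\le\tau_1$, hence in particular for the true mean.
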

Thus, at time $k$, $\min\{U^r_\ell\mid\ell\le k\}$ is a $(1-\alpha)$-confidence upper bound.
Moreover, if its value at time $k$ is not convenient, one may continue sampling
in the hope to find
a lower upper bound.
\\
In order to construct a suitable family of test martingales,
consider the test martingale multiplication factor
$$\rho=1-c\,(T-\mu)/(\tau_1-\mu),$$
where $c$ may depend on $\mu$.
%If $b\ge0$ is not depending on $\mu$, then clearly $\rho$ is increasing with respect to $\mu$.
We want it increasing in $\mu$, for all values of $T$. 
This is the case if $\tau_0\le T\le\tau_1$ a.s. and $c=c(\mu)$ depends differentiably on $\mu$ and satisfies:
$$
0\le c(\tau_1)\le c(\tau_0)\le1
\hbox{ and }
0\le-\frac{\partial}{\partial\mu}\log(c(\mu))\le\frac1{\tau_1-\mu}+\frac1{\mu-\tau_0}.
$$
We present two examples.
\begin{itemize}
\item
One may take $c$ independent of $\mu$, resulting in the test martingales $\{M^\mu_k(c)\}_k$. 
%\item
Instead of a fixed value of $c$,
one may choose a fixed probability density $\pi$ on $[0,1]$, and take the integrated test martingales 
$\{M^\mu_k(\pi)\}_k$ with $M^\mu_k(\pi)=\int M^\mu_k(c)\pi(c)dc$.
\item
If $\tau_0\le T$ a.s. and one considers only upper bounds $\mu\ge m+\tau_0$,
one may take $c=d\,(\tau_1-\mu)^{r}/(\mu-\tau_0)^{s}$ for some $0\le r\le1$
and $0\le s\le1$
and $0\le d\le m^s/(\tau_1-\tau_0-m)^{r}$.
Instead of a fixed value of $d$,
one may consider
integrated test martingales, integrated over $d$ with respect to some probability density on 
$[0,m^s/(\tau_1-\tau_0-m)^{r}]$.
\end{itemize}
%
%
%
%
%
%
\iffalse
# Confidence upper bounds
# feed Alt(0.5)
N<-100
cc<-0.9;nu<-0.07;alpha<-0.05
x<-runif(N)<nu
Msup<-function(mu)max(cumsum(log(1-cc/(1-mu)*(x-mu))))
uniroot(function(mu)Msup(mu)-log(1/alpha),c(0,0.99),tol=1e-8)
reps<-10000;res<-numeric(reps)
for(i in 1:reps){
	x<-runif(N)<nu;res[i]<-
	(uniroot(function(mu)Msup(mu)-log(1/alpha),c(0,0.99),tol=1e-8)$root)}
sum(res>nu)/reps
hist(res)
#################
# Nu met b=cc/mu/(1-mu)
N<-100
cc<-0.015;nu<-0.07;alpha<-0.05
x<-runif(N)<nu
Msup<-function(mu)max(cumsum(log(1-cc/mu/(1-mu)*(x-mu))))
uniroot(function(mu)Msup(mu)-log(1/alpha),c(cc+1e-8,0.99),tol=1e-8)
reps<-10000;res<-numeric(reps)
for(i in 1:reps){
	x<-runif(N)<nu;res[i]<-
	uniroot(function(mu)Msup(mu)-log(1/alpha),c(cc+1e-8,0.99),tol=1e-8)$root}
sum(res>nu)/reps
hist(res)

\fi

\noindent
In the context of Remarks \ref{sharpness} and \ref{sharpness2} 
we present the following example with $c=1$:
\begin{remark}
Suppose $T$ has alternative distribution $Alt(\nu)$, and at time $k$ one has observed $s_k=t_1+\cdots+t_k$.
Choose $\tau_1\ge1$, and consider the test martingales 
constructed with the factors $(1-(t-\mu)/(\tau_1-\mu))=(\tau_1-t)/(\tau_1-\mu)$.
Then $M_k^\mu=(\tau_1-1)^{s_k}(\tau_1)^{k-s_k}/(\tau_1-\mu)^k$.
The corresponding $(1-\alpha)$-confidence upper bound is 
$\mu_k=\tau_1-\tau_1(\alpha ((\tau_1-1)/\tau_1)^{s_k})^{1/k}$.
Considering $(M^\mu)^*_\infty$, 
one sees that even $\mu^*=\min_k\mu_k$ is a $(1-\alpha)$-confidence upper bound.
In case $\tau_1=1$, 
the confidence upper bound $\mu_k$ equals $(1-\alpha^{1/k})$
if $t_1=\ldots=t_k=0$ and this bound equals the minimal value $\mu^*$ if $t_{k+1}=1$.
\end{remark}
%
%
%
%
%We are interested in a stopping rule for determining the search for a convenient
%confidencce upper bound in terms of the observed sample mean. 
In the following remark we show the relation between the upper bound
$U^r_k$ and the sample mean $\overline t_k=\frac1n\sum_{i=1}^kt_i$.
Notice that $\log(M^\mu_k(c))=\sum_{i=1}^k\ell(t_i)$ 
with 
$$\ell(t)=\log(1-c\,(t-\mu)/(\tau_1-\mu)).$$
Function $\ell$ is concave so that by Jensen's inequality 
$\log(M^\mu_k(c))\le k\,\ell(\overline t_k)$,
In particular for $\mu=\overline t_k$ we have $\log(M^\mu_k(c))\le0$, $M^\mu_k(c)\le1$ and
for an integrated martingale
$M^\mu_k(\pi)\le1$.

\begin{remark}\label{Rem7}
Let $\{M^\mu_k(c)\}_k$ be the test martingale with $c_{k-1}=c$ for all $k$.
Suppose that for some $c$ it holds that $M^\mu_k(c)>1$, then 
$\overline t_k<\mu$.
For integrated test martingales: $M_k^\mu(\pi)>1$
also implies $\overline t_k<\mu$.
In particular, with these test martingales $\overline t_k< U^r_k$.
\end{remark}
If at some stopping time $N$ we stop sampling, the status of the confidence upper bound
$\min\{U^r_k\mid k\le N\}$ is clear from the considerations above but
it is not clear in what sense $\overline t_N$ would approximate $\E(T)$.
\\
\\
We present a table of average sample numbers and average mean taintings in 
Table \ref{T2}.
The null hypothesis $H_0:\E(T)\ge\mu_0$ and desired precision $m$ are fixed at $\mu_0=m=0.05$, expectation $\E(T)\le\nu$ is (correctly) guessed to hold with $\nu= 0.02$, leading to
the use of the uniform probability measure on $[0.6,1]$ for the integrated
test martingales $M_k^\mu(\pi)$.
For each $T$-distribution we simulate 1000 runs, each leading 
(after at least 50 observations to avoid high values of $\overline t_n$) 
to 
a $(1-\alpha)$-confidence upperbound $\mu_n^*=\min\{\mu_k\mid k\le n\}$
such that $\mu_n^*-\overline t_n\le m$, and
we record the average sample number $n$ and average sample mean $\overline t_n$.
These runs are extended until a $(1-\alpha)$-confidence upperbound 
satisfies $\mu_n-\overline t_n\le m$,
and we record again the average sample number $n$ and average sample mean $t_n$.
Then these runs are extended until rejection of $H_0$ of which the average sample number $n$ with $\mu_n\le m$ is recorded. 
\begin{table}[h]
\begin{tabular}{lcc}
$T:$&\multicolumn{2}{c}{$\Alt(0.02)$}\\
      &\multicolumn{1}{c}{avg}&sd\\
$\mu_n^*$ & 99.748& 35.10 \\
$\overline t_n$ & 0.0167 & 0.0132 \\
$\mu_n$ & 121.07 & 43.91 \\
$\overline t_n$ & 0.0157 & 0.0119 \\
$\mu_n\le m$     & 268.31 & 246.07 \\
%\multicolumn{2}{l}{uniform on}
%\\
%$[0.6,1]$ & 268.31 & 246.07
\end{tabular}
\begin{tabular}{cc}
\multicolumn{2}{c}{$5T\sim\Alt(0.10)$}\\
\multicolumn{1}{c}{avg}&sd\phantom{0}\\
71.279 & 1.34 \\
0.0200 & 0.0070 \\
72.026 & 0.28 \\
0.0199 & 0.0070 \\
124.16 & 24.54 \\
%124.16 & 24.54
\end{tabular}
%\qquad
\begin{tabular}{rr}
\multicolumn{2}{c}{$\hbox{Beta}(0.02,0.98)$}\\
\multicolumn{1}{c}{avg}&sd\phantom{0}\\
77.21 & 11.47 \\
0.0195 & 0.0108 \\
82.66 & 12.68 \\
0.0190 & 0.0100 \\
157.44 & 80.37 \\
%157.44& 80.37
\end{tabular}
\begin{tabular}{rr}
\multicolumn{2}{c}{$\hbox{Beta}(2,98)$}\\
\multicolumn{1}{c}{avg}&sd\phantom{0}\\
69.99 & 0.09 \\
0.0200 & 0.0017 \\
69.99 & 0.08 \\
0.0200 & 0.0017\\
117.79 & 5.07 \\
%117.79&5.07
\end{tabular}
\begin{tabular}{r}
$0.02$\\
\multicolumn{1}{c}{=}\\
70 \\
0.02 \\
70 \\
0.02 \\
117 \\
%117
\end{tabular}
\caption{Average sample numbers, average sample means and their standard deviations, 
in each case based on 1000 runs.}
\label{T2}
\end{table}
%\iffalse
~\\
A way to shift from an initial suitable family of test martingales to another one is the following.
In Remark \ref{filtration} the context is explained in which a test martingale may be made 
dependent on insights acquired during its construction. 
Essentially the insights are not allowed to depend on any information about the random variables $T_\ell$ 
that are not yet processed in the martingale.
It is wrong to change at time $n$ the dependence on $T_1,\ldots,T_n$ of the test martingales
$\{M_k^\mu\}_{k=0}^n$, 
in the sense that the process will lose its measurability properties.
\\
Suppose at time $n$, in view of the observed values of $T_1,\ldots,T_n$,  
one wants to shift to an other \emph{suitable} family of test martingales $\{N^\mu_k\}_{k,\mu}$,
to be based on the random sample $T_{n+1},T_{n+2},\ldots$, such that $N^\mu_0=1$ for all $\mu$.
\iffalse
A typical example is a choice at time $n$ of some probability density $\pi_n$ on [0,1] and 
$N^\mu_\ell=M^\mu_{n+\ell}(\pi_n)/M^\mu_n(\pi_n)$ where 
$M^\mu_{n+k}(\pi_n)$ is based on $T_1,\ldots,T_n,T_{n+1},\ldots,T_{n+k}$.
\fi
One may proceed as follows:
%$$M_{n+\ell}^\mu(t_1,\ldots,t_{n+\ell})=M_n^\mu(t_1,\ldots,t_n) N_\ell^\mu(t_{n+1},\ldots,t_{n+\ell}).$$
$$M_{n+\ell}^\mu=M_n^\mu N_\ell^\mu.$$
This will result in a suitable family of test martingales $\{M_k^\mu\}_{k,\mu}$.

\subsubsection*{Confidence intervals}
It is more or less natural to associate confidence intervals for $\E(T)$ with a family of
tests of the hypotheses $H_0:\E(T)=\mu$, where $\mu\in\R$.
The desirable property  of the family of tests then is that 
if $H_0:\E(T)=\mu$ can not be rejected for $\mu=\mu_1$ and $\mu=\mu_2$, 
it will not be rejected for all $\mu$ between $\mu_1$ and $\mu_2$.
We will try to find a family of test martingales 
designed to produce two-sided confidence intervals.
Notice the following convexity property of the test martingales $\{M_n^\mu(c)\}_n$:
\begin{lemma}\label{convexity}
Suppose $t_1,\ldots,t_n\in[\tau_0,\tau_1]$, and $0\le c\le1$.
Then
$M_n^{\mu,+}(c)=\prod_{i=1}^n(1-c(t_i-\mu)/(\tau_1-\mu))$ 
and
$M_n^{\mu,-}(c)=\prod_{i=1}^n(1-c(\mu-t_i)/(\mu-\tau_0))$
are convex functions in $\mu$.

\end{lemma}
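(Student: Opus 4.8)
The plan is to prove the \emph{stronger} statement that each of $M_n^{\mu,+}(c)$ and $M_n^{\mu,-}(c)$ is log-convex in $\mu$, i.e. that $\mu\mapsto\log M_n^{\mu,\pm}(c)$ is convex; convexity of $M_n^{\mu,\pm}(c)$ itself then follows because $x\mapsto\exp(x)$ is convex and nondecreasing. The reason to aim for log-convexity rather than convexity directly is the decisive one: a product of convex functions need not be convex, whereas log-convexity \emph{is} preserved under products, since $\log$ of a product is a sum of logs. Thus if every single factor $\mu\mapsto 1-c(t_i-\mu)/(\tau_1-\mu)$ (resp. $\mu\mapsto 1-c(\mu-t_i)/(\mu-\tau_0)$) is log-convex, then so are the products $M_n^{\mu,+}(c)$ and $M_n^{\mu,-}(c)$, and the whole lemma reduces to log-convexity of one factor.

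To make the log-convexity of a single factor transparent I would rewrite it as a manifestly favourable combination. A short calculation gives
\begin{equation*}
1-c\,\frac{t_i-\mu}{\tau_1-\mu}=(1-c)+\frac{c(\tau_1-t_i)}{\tau_1-\mu},
\qquad
1-c\,\frac{\mu-t_i}{\mu-\tau_0}=(1-c)+\frac{c(t_i-\tau_0)}{\mu-\tau_0}.
\end{equation*}
In each case the factor is a nonnegative constant $(1-c)$ plus a nonnegative multiple of $1/\ell(\mu)$, where $\ell(\mu)=\tau_1-\mu$ (resp. $\ell(\mu)=\mu-\tau_0$) is affine and strictly positive on the relevant domain $\mu<\tau_1$ (resp. $\mu>\tau_0$); here I use $t_i\in[\tau_0,\tau_1]$ and $0\le c\le1$ to see that both the constant and the coefficient $c(\tau_1-t_i)$, $c(t_i-\tau_0)$ are nonnegative. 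Since $-\log\ell(\mu)$ has second derivative $\ell(\mu)^{-2}>0$, the function $1/\ell(\mu)$ is log-convex, a positive constant is trivially log-convex, and a sum of positive log-convex functions is again log-convex. Hence each factor is log-convex and the proof closes.

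The one genuinely non-elementary ingredient is the closure of log-convexity under \emph{addition} (closure under multiplication being immediate); I would either invoke it as the standard fact that it is, or, to stay self-contained, replace it by a direct check that $(\log g)''\ge0$ for a single factor $g$, which after clearing denominators reduces exactly to the inequality $c(\tau_1-t_i)\ge0$ (resp. $c(t_i-\tau_0)\ge0$) already noted above. The only point requiring separate mention is the degenerate case excluded from the log-convexity argument, namely $c=1$ together with some $t_i$ equal to the pertinent endpoint ($\tau_1$ for $M^{\mu,+}$, $\tau_0$ for $M^{\mu,-}$): then that factor, and hence the whole product, is identically zero, and a constant function is convex, so this case causes no trouble. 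I therefore expect the main obstacle to be conceptual rather than computational: recognizing that one should upgrade from convexity to log-convexity, and handling the vanishing factor at the endpoints.
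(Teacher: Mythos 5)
Your proof is correct, and it takes a genuinely different route from the paper's at the decisive step, even though both arguments begin with the same rewriting of the factors, $1-c(t_i-\mu)/(\tau_1-\mu)=(1-c)+c(\tau_1-t_i)/(\tau_1-\mu)$ and $1-c(\mu-t_i)/(\mu-\tau_0)=(1-c)+c(t_i-\tau_0)/(\mu-\tau_0)$. The paper never passes to logarithms: it observes that each factor of $M_n^{\mu,+}(c)$ is nonnegative, convex and \emph{increasing} in $\mu$, while each factor of $M_n^{\mu,-}(c)$ is nonnegative, convex and \emph{decreasing} in $\mu$, and then uses the elementary closure property that a product of nonnegative convex functions that are all increasing (or all decreasing) is again convex; for smooth factors this is just $(fg)''=f''g+2f'g'+fg''\ge0$, where $f'g'\ge0$ because the monotonicities agree. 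This shared monotonicity is exactly what repairs the false statement ``products of convex functions are convex'' that you rightly flag; you repair it instead by upgrading to log-convexity, which is closed under products unconditionally. Your argument is sound, including the separate treatment of the degenerate case ($c=1$ with some $t_i$ at the relevant endpoint, where one factor, hence the whole product, vanishes identically), but it costs the one nontrivial ingredient you acknowledge: closure of log-convexity under addition, or the equivalent direct verification that $(\log g)''\ge0$ for a single factor. As for what each approach buys: the paper's proof is entirely elementary and delivers monotonicity of $\mu\mapsto M_n^{\mu,\pm}(c)$ as a byproduct, which is precisely the property the paper demands of a \emph{suitable} family of test martingales in its construction of confidence upper bounds; your proof delivers the strictly stronger conclusion of log-convexity, which is also closed under nonnegative mixtures, so it shows that the integrated martingale $M_n^\mu(\pi)$ in the theorem following the lemma is in fact log-convex in $\mu$ rather than merely convex, and it would continue to work for products mixing factors of the two types, where the paper's monotonicity argument breaks down.
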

\begin{proof}
The factors $(1-c(t_i-\mu)/(\tau_1-\mu))=(1-c)+c(\tau_1-t_i)/(\tau_1-\mu)$
are convex and increasing in $\mu$, so that $M_n^{\mu,+}(c)$ is convex and increasing in $\mu$.
The factors $(1-c(\mu-t_i)/(\mu-\tau_0))=(1-c)+c(t_i-\tau_0)/(\mu-\tau_0)$
are convex and decreasing in $\mu$, so that $M_n^{\mu,-}(c)$ is convex and decreasing in $\mu$.
\end{proof}
\\
\\
\iffalse
We will call a function $f:[\tau_0,\tau_1]\to \R$ connected if for all $x$ the subset 
$\{\mu\in[\tau_0,\tau_1]\mid f(\mu)< x\}$ is a connected subset of $\R$ or empty.
Suppose $\tau_0\le T\le\tau_1$ a.s..
Suppose that we have test martingales $\{M_n^\mu\}$ for the hypotheses $H_0:\E(T)=\mu$.
Suppose moreover that for each realization $\mu\to M_n^\mu$ is connected.
Then $\{\mu\in[\tau_0,\tau_1]\mid \forall k\le n: M_k^\mu<1/\alpha\}$ 
is a confidence $(1-\alpha)$-confidence interval.
It is the intersection of the $n$ connected subsets 
$\{\mu\in[\tau_0,\tau_1]\mid M_k^\mu<1/\alpha\}$.
\fi
Now consider a probability density $\pi$ on $[1,-1]$ and let
\begin{align*}
M_n^\mu(\pi)&=\int_{-1}^1 M_n^\mu(c)\pi(c)dc, \hbox{ where}
\\
M_n^\mu(c)&=\begin{cases}
\prod_{i=1}^n(1-c(T_i-\mu)/(\tau_1-\mu)), \hbox{ if }0\le c\le1
\\
\prod_{i=1}^n(1-c(T_i-\mu)/(\mu-\tau_0)), \hbox{ if }-1\le c\le0.
\end{cases}
\end{align*}
%\fi
\begin{theorem}
$\mu\mapsto M_n^\mu(\pi)$ is a convex function.
The region
$\{\mu\mid \forall n:M_n^\mu(\pi)<1/\alpha\}$ is a $(1-\alpha)$-confidence interval.
Thus, for any $k$, $\{\mu\mid \forall n\le k:M_n^\mu(\pi)<1/\alpha\}$
is a $(1-\alpha)$-confidence interval.
\end{theorem}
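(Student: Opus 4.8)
The plan is to treat the three assertions separately, reducing each to Lemma \ref{convexity} and the maximal lemma. Throughout write $C=\{\mu\mid\forall n:M_n^\mu(\pi)<1/\alpha\}$ for the candidate confidence interval and $p=\E(T)$ for the true mean.

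I would begin with convexity. For fixed $c\in[0,1]$ the map $\mu\mapsto M_n^\mu(c)$ is exactly $M_n^{\mu,+}(c)$ of Lemma \ref{convexity}, while for $c\in[-1,0]$, setting $c'=-c$, one has $M_n^\mu(c)=\prod_{i=1}^n(1-c'(\mu-T_i)/(\mu-\tau_0))=M_n^{\mu,-}(c')$. Lemma \ref{convexity} says both are convex in $\mu$, so $\mu\mapsto M_n^\mu(c)$ is convex for every fixed $c\in[-1,1]$. Since $\pi(c)\,dc$ is a nonnegative measure and integration preserves convexity, $\mu\mapsto M_n^\mu(\pi)=\int_{-1}^1 M_n^\mu(c)\pi(c)\,dc$ is convex. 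The interval property then follows formally: for each $n$ the strict sublevel set $\{\mu:M_n^\mu(\pi)<1/\alpha\}$ is convex because $M_n^\mu(\pi)$ is convex in $\mu$, and $C=\bigcap_n\{\mu:M_n^\mu(\pi)<1/\alpha\}$ is an intersection of convex sets, hence an interval.

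The crux is the coverage bound $\P\{p\in C\}\ge1-\alpha$, and here I would show that $\{M_n^p(\pi)\}_n$ is a test martingale for $H_0:\E(T)=p$. For each $c\in[0,1]$ the factor $1-c(T_k-p)/(\tau_1-p)$ is nonnegative (since $T_k\le\tau_1$ forces $(T_k-p)/(\tau_1-p)\le1$) and has conditional expectation $1$ (since $\E(T)=p$ and $T_k$ is independent of $\mathcal F_{k-1}$), so $\{M_n^p(c)\}_n$ is a positive martingale with $M_0^p(c)=1$; for $c\in[-1,0]$ the factor $1-c'(p-T_k)/(p-\tau_0)$ is nonnegative (since $T_k\ge\tau_0$) with conditional expectation $1$, giving the same conclusion. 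Exchanging conditional expectation with the $c$-integral, which is legitimate by positivity, shows $\{M_n^p(\pi)\}_n$ is a positive martingale with $M_0^p(\pi)=\int_{-1}^1\pi(c)\,dc=1$, hence a test martingale. Applying the maximal lemma (Lemma \ref{maximal lemma}) with $\lambda=1/\alpha$ yields $\P\{\exists n:M_n^p(\pi)\ge1/\alpha\}\le\alpha$, and taking complements gives $\P\{p\in C\}\ge1-\alpha$. The one place requiring care is the recognition that mixing the $c>0$ one-sided martingales for $H_0:\E(T)\ge\mu$ with the $c<0$ ones for $H_0:\E(T)\le\mu$ still produces a positive supermartingale at $\mu=p$; this is the integrated analogue of the earlier remark on combining two one-sided test martingales.

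Finally, for the truncated region $C_k=\{\mu\mid\forall n\le k:M_n^\mu(\pi)<1/\alpha\}$ I would note that $C_k$ is a finite intersection of the same convex sublevel sets, hence an interval, and that $C_k\supseteq C$, so $\P\{p\in C_k\}\ge\P\{p\in C\}\ge1-\alpha$. Thus every $C_k$ is a $(1-\alpha)$-confidence interval, which is the practically useful statement since it relies only on the first $k$ observations.
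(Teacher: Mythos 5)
Your proof is correct and follows essentially the same route as the paper: convexity of $\mu\mapsto M_n^\mu(\pi)$ via Lemma \ref{convexity} and integration against $\pi$, hence interval-shaped sublevel sets, and coverage from the fact that $\{M_n^p(\pi)\}_n$ is a positive (super)martingale at the true mean $p=\E(T)$ combined with the maximal lemma. The only difference is that you spell out details the paper leaves implicit — the nonnegativity of the factors, the interchange of the $c$-integral with conditional expectation, and the monotonicity argument for the truncated regions $C_k$ — which strengthens rather than changes the argument.
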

\begin{proof}[Proof of Theorem]
It follows from Lemma \ref{convexity} that $\mu\mapsto M_n^\mu(\pi)$ is a convex function.
%and therefore connected. 
Thus for each $n$ the set 
$\{\mu\mid M_n^\mu(\pi)<1/\alpha\}$ is an interval, possibly empty, as well as their intersection
$\{\mu\mid \forall n:M_n^\mu(\pi)<1/\alpha\}$.
The probability that $\E(T)=\nu$ lies in the intersection of all $(1-\alpha)$-confidence regions
is the probability that $M_n^\nu(\pi)<1/\alpha$ for all $n$.
Since $M_n^\nu(\pi)$ is a supermartingale, this probability is at least $(1-\alpha)$.
\end{proof}
\\
\\
It is an uncommon and unpleasant feature of sequential procedures 
that it may happen (probability at most $\alpha$)
that the confidence interval is empty.
If one would like to avoid weird decisions, one could stick to one of the
confidence intervals $\{\mu\mid \forall n\le k:M_n^\mu(\pi)<1/\alpha\}$ that is not empty.
In the following we remark that $\overline t_n\in\{\mu\mid M_n^\mu(\pi)<1/\alpha\}$, 
where as usual
$\overline t_n$ denotes the sample average $\frac1n(t_1+\ldots+t_n)$.

\begin{remark}
(Cf. Remark \ref{Rem7})
Let
$\ell(t)=\log(1-c\,(t-\mu)/(\tau_1-\mu)$ if $c\ge0$ and 
$\ell(t)=\log(1-c\,(t-\mu)/(\mu-\tau_0)$ if $c\le0$.
Then $\ell$ is concave. 
From Jensen's inequality follows that $\log(M^\mu_n(c))\le0$ if $\mu=\overline t_n$.
In particular $\overline t_n\in\{\mu\mid M^\mu_n(\pi)<1/\alpha\}$.
\end{remark}

%%%%%%%%%%%%%%%%%%%%%%%%%%%%%%%%%%%%%%%%%%%%%%%%%%%%%%%%%%%%%%%%%%%%%%%%%%%%%%

~\\
\\
Institute for Mathematics, Astrophysics and Particle Physics (IMAPP),
\\
Faculty of Science,
Radboud University Nijmegen,
\\
Heyendaalseweg 135, 6525 AJ Nijmegen, The Netherlands
\\
E-mail: \texttt{H.Hendriks@math.ru.nl}
\end{document}

# Integrated Martingales

cs<-(1:10-0.5)/10
nu<-0.05
T<-runif(1000)<nu
cub<-numeric(length(T))
for(i in 1:length(T))
{
try(cub[i]<-uniroot(function(mu)mean(apply(1-cs%*%t(T[1:i]-mu)/(1-mu),1,prod))-1/alpha,c(0.01,0.999))$root)
}
plot(cub,ylim=c(0,1)) # plot(cub,ylim=c(0,1),type='l')
abline(h=nu)

# Integrated Martingale, Ts=(T1,...,T{k-1}), corresponding c for factor (1-c*(Tk-mu)/(1-mu))
IMc<-function(Ts,mu){
mean(cs*apply(1-cs%*%t(Ts-mu)/(1-mu),1,prod))/mean(apply(1-cs%*%t(Ts-mu)/(1-mu),1,prod))
}